\newtheorem{theorem}{Theorem}
\newtheorem{lemma}{Lemma}
\newtheorem{proposition}{Proposition}
\newtheorem{claim}{Claim}
\newcommand{\tw}{{\mathbf{tw}}}
\newcommand{\cO}{\mathcal{O}}
\DeclareMathOperator{\operatorClassNP}{NP}
\newcommand{\classNP}{\ensuremath{\operatorClassNP}}
\DeclareMathOperator{\operatorClassCoNP}{coNP}
\newcommand{\classCoNP}{\ensuremath{\operatorClassCoNP}}
\DeclareMathOperator{\operatorClassFPT}{FPT}
\newcommand{\classFPT}{\ensuremath{\operatorClassFPT}}
\DeclareMathOperator{\operatorClassW}{W}
\newcommand{\classW}[1]{\ensuremath{\operatorClassW[#1]}}
\newcommand{\cost}{W}
\newcommand{\costfunction}{\omega}
\newcommand{\defproblemu}[3]{
  \vspace{1mm}
\noindent\fbox{
  \begin{minipage}{0.95\textwidth}
  #1 \\
  {\bf{Input:}} #2  \\
  {\bf{Question:}} #3
  \end{minipage}
  }
  \vspace{1mm}
}
\begin{document}

\title{Parameterized Complexity of Secluded Connectivity Problems\thanks{The research leading to these results has received funding from the European Research Council under the European Union's Seventh Framework Programme (FP/2007-2013) / ERC Grant Agreement n. 267959 and the Government of the Russian Federation (grant 14.Z50.31.0030).}}

\author{
Fedor V. Fomin\thanks{Department of Informatics, University of Bergen, Norway and St.~Petersburg Department of Steklov Institute of Mathematics of the Russian Academy of Sciences}
\and
Petr A. Golovach\thanks{Department of Informatics, University of Bergen, Norway and St.~Petersburg Department of Steklov Institute of Mathematics of the Russian Academy of Sciences}
\and
Nikolay Karpov\thanks{St.~Petersburg Department of Steklov Institute of Mathematics of the Russian Academy of Sciences}
\and
Alexander S. Kulikov\thanks{St.~Petersburg Department of Steklov Institute of Mathematics of the Russian Academy of Sciences}}

\maketitle
\sloppy

\begin{abstract}
The \textsc{Secluded Path} problem introduced by Chechik et al. in [ESA 2013] models a situation where a sensitive information has to be transmitted between a pair of nodes along a path in a network.
The measure of the quality of a selected path is its \emph{exposure}, which is the total weight of vertices in its closed neighborhood.
 In order to minimize the risk of intercepting the information, we are interested in  selecting a \emph{secluded} path, i.e. a path with a small exposure. Similarly, 
the  \textsc{Secluded Steiner Tree} problem is to find a tree in a graph connecting a given set of terminals such that the  exposure of the tree is 
  minimized.  
  In this work, we obtain the following   
  results about parameterized complexity of secluded connectivity problems.  

We start from an observation that being parameterized by the size of the exposure, the problem is fixed-parameter tractable (\classFPT). More precisely, we give an algorithm deciding if a graph $G$ with a given cost function $\costfunction\colon V(G)\rightarrow \mathbb{N}$ contains a secluded path of exposure at most $k$ with the cost at most $C$ in time $\cO(3^{k/3}\cdot (n+m) \log{\cost})$, where $\cost$ is the maximum value of $\costfunction$ on  an input graph $G$. Similarly, 
\textsc{Secluded Steiner Tree} is  solvable in time $\cO(2^{k}\cdot (n+m) k^2 \log{\cost})$.

The main result of this paper is about  ``above guarantee" parameterizations for secluded problems.  We show that
 \textsc{Secluded Steiner Tree} is 
 \classFPT{}  being  parameterized by $r+p$,  where $p$ is the number of the terminals, $\ell$ the size of an optimum Steiner tree, and $r=k-\ell$. We complement this result by showing that the problem is co-\classW{1}-hard when parameterized by $r$ only.

 We also investigate  
 \textsc{Secluded Steiner Tree} from kernelization perspective and provide several lower and upper bounds when parameters are the treewidth, the size of a vertex cover, maximum vertex degree and the solution size.
Finally, we   refine the algorithmic result of Chechik et al. by improving the exponential dependence from the treewidth of the input graph.
\end{abstract}

\section{Introduction}

\textsc{Secluded Path} and \textsc{Secluded Steiner Tree} problems were introduced in 
Chechik et al. in \cite{ChechikJPP13}. In the \textsc{Secluded Path} problem, 
for  given  vertices $s$ and $t$ of a  graph $G$, the task is to find an $s,t$-path with the minimum \emph{exposure}, i.e.   a path $P$ such that the number of  vertices from $P$ plus the number of vertices of $G$ adjacent to vertices of $P$ is minimized. 
The name secluded comes from the setting where one wants to transfer a confident information over a path in a network which can be intercepted either while passing through a vertex of the path or from some adjacent vertex. Thus the problem is to select a secluded path minimizing the risk of interception of the information.  When instead of connecting two vertices one needs to connect a set of terminals, we arrive naturally to the \textsc{Secluded Steiner Tree}.

More precisely, \textsc{Secluded Steiner Tree} is the following problem.

\defproblemu{\textsc{Secluded Steiner Tree}}{A graph $G$ with a cost function $\costfunction\colon V(G)\rightarrow \mathbb{N}$, a set $S=\{s_1,\ldots,s_p\}\subseteq V(G)$ of terminals, and   non-negative integers $k$ and $C$.}
{Is there a connected subgraph $T$ of $G$ with $S\subseteq V(T)$ such 
                            that $|N_G[V(T)]|\leq k$ and $\costfunction(N_G[V(T)])\leq C$?}

\medskip \noindent
If $\costfunction(v)=1$ for each $v\in V(G)$ and $C=k$, then  we have  an instance of \textsc{Secluded Steiner Tree} without costs; respectively, we omit $\costfunction$ and $C$ whenever we consider such instances.

Clearly, it can be assumed that $T$ is a tree, and thus the problem can be seen as a variant of the classical \textsc{Steiner Tree} problem. 
For the special case $p=2$, we call  the problem \textsc{Secluded Path}.
 
 \noindent\textbf{Previous work.} The study of the secluded connectivity  was initiated by 
Chechik et al. \cite{ChechikJPP12,ChechikJPP13} who showed that 
the decision version of \textsc{Secluded Path} 
without costs
is \classNP-complete. Moreover, for the optimization version of the problem, it is hard to approximate
within a factor of $\cO(2^{\log^{1-\varepsilon}{n}})$,  $n$ is the number of vertices  in the input graph, for any $\varepsilon> 0$ (under an appropriate complexity assumption) \cite{ChechikJPP13}. Chechik et al. \cite{ChechikJPP13} also provided several approximation and parameterized algorithms for \textsc{Secluded Path} and \textsc{Secluded Steiner Tree}. Interestingly, when there are no costs,  
  \textsc{Secluded Path} is solvable in time $\Delta^\Delta \cdot n^{\cO(1)}$, where $\Delta$ is the maximum vertex degree and and thus  is   \classFPT{} being parameterized by $\Delta$.  
Chechik et al. \cite{ChechikJPP13}  also showed that when the treewidth of the input graph does not exceed  $t$, then 
   the \textsc{Secluded Steiner Tree} problem is solvable in time 
$2^{\cO(t\log{t})} \cdot n^{O(1)}\cdot \log{\cost}$,\footnote{In fact, Chechik et al. \cite{ChechikJPP13} give the algorithm that finds a tree with the  exposure of minimum cost, but the algorithm can be easily modified for the more general \textsc{Secluded Steiner Tree}.}
where $\cost$ is the maximum value of $\costfunction$ on  an input graph $G$.
Johnson et al.~\cite{JohnsonLR14} obtained several approximation results for  \textsc{Secluded Path} and showed that the problem with costs is  \classNP{}-hard for subcubic graphs improving the previous result of Chechik et al. \cite{ChechikJPP13} for graphs of maximum degree 4.

The problems related to secluded path and connectivity  under different names 
were considered by several authors. Motivated by 
  secure communications in wireless ad hoc networks, Gao et al.\cite{gao2012thinnest} introduced the very similar notion  of the thinnest path. The motivation of Gilbers \cite{Gilbers13}, who introduced the problem under the name of the minimum witness path, came from the study of art gallery problems.

 \noindent\textbf{Our results.}
In this paper we initiate the systematic study of both problems from the Parameterized Complexity perspective and obtain the following results.
  In Section~\ref{sec:FPT-k}, we show that  \textsc{Secluded Path} and  \textsc{Secluded Steiner Tree} are FPT when parameterized by the size of the solution $k$ by giving algorithms of running time 
$\cO(3^{k/3}\cdot(n+m) \log{\cost})$ and $\cO(2^{k}\cdot (n+m) k^2 \log{\cost})$,  where $\cost$ is the maximum value of $\costfunction$ on  an input graph $G$, correspondingly. 
 
 We  consider the ``above guarantee" parameterizations of both problems in Section~\ref{sec:FPT-above}.
Recall that if $s_1,\ldots,s_p$ are vertices of a graph $G$, then a connected subgraph $T$ of $G$ of minimum size such that $s_1,\ldots,s_p\in V(G)$ is called a \emph{Steiner tree} for the terminals $s_1,\ldots,s_p$. If $p=2$, then a Steiner tree is a shortest $(s_1,s_2)$-path. 
Clearly, if $\ell$ is the size (the number of vertices) of a Steiner tree, then for any connected subgraph $T$ of $G$ with $S\subseteq V(T)$, $|N_G[V(T)]|\geq \ell$. 
Recall that the \textsc{Steiner Tree} problem is well known
to be \classNP-complete as it is included in the famous Karp's list of 21 \classNP-complete problems~\cite{Karp72}, but  in 1971 Dreyfus and Wagner~\cite{DreyfusW71} proved that the problem can be solved in time $O^*(3^p)$, i.e., it is \classFPT ~when parameterized by the number of terminals. The currently best \classFPT-algorithms for \textsc{Steiner Tree} running in time $O^*(2^p)$ are given by Bj{\"{o}}rklund et al.~\cite{BjorklundHKK07} and Nederlof~\cite{Nederlof13} (the first algorithm demands exponential in $p$ space and the latter uses polynomial space).
In Section~\ref{sec:FPT-above} we show that  
\textsc{Secluded Path} and \textsc{Secluded Steiner Tree} are \classFPT{} when the problems are parameterized by $r+p$,  where $r=k-\ell$. From the other side, we
show that the problem is co-\classW{1}-hard when parameterized by $r$ only.

 In Section~\ref{sec:struct}, we  provide a thorough study of the kernelization of the  problem from the structural paramaterization perspective.
We consider   parameterizations by  the treewidth, size of the solution, maximum degree and the size of a vertex cover of the input graph.   We show that
  it is unlikely that \textsc{Secluded Path} (even without costs) parameterized by the solution size, the treewidth and the maximum degree of the input graph,  
admits a polynomial kernel. In particular, this  complements the  \classFPT{} algorithmic findings of Chechik et al. \cite{ChechikJPP13}  for graphs of bounded treewdith and of bounded maximum vertex degree.
  The same holds for the ``above guarantee" parameterization instead the solution size as well.
On the other hand, we show that \textsc{Secluded Steiner Tree} has a polynomial kernel when parameterized by $k$ and the vertex cover number of the input graph. Interestingly, when we parameterize only by the vertex cover number, again, we  show that most likely the problem does not admit a polynomial kernel.
Finally, we refine  the algorithm on graphs of bounded treewidth of  Chechik et al.~\cite{ChechikJPP13} by showing that \textsc{Secluded Steiner Tree}
without costs 
can be solved by a randomized algorithm in time that single-exponentially depends on treewidth by applying the Count \& Color technique of Cygan et al. \cite{cut-and-count} and further observe
that for the general variant of the problem with costs, the same Count \& Color technique can be used as well and also  
a single-exponential deterministic algorithm can be obtained by making use the representative set technique developed by Fomin et al.~\cite{fomin2014efficient}.

\section{Basic definitions and preliminaries}\label{sec:defs}
We consider only finite undirected graphs without loops or multiple
edges. The vertex set of a graph $G$ is denoted by $V(G)$ and  
the edge set  is denoted by $E(G)$. Throughout the paper we typically use $n$ and $m$ to denote the number of vertices and edges respectively.

For a set of vertices $U\subseteq V(G)$,
$G[U]$ denotes the subgraph of $G$ induced by $U$.
For a vertex $v$, we denote by $N_G(v)$ its
\emph{(open) neighborhood}, that is, the set of vertices which are adjacent to $v$, and for a set $U\subseteq V(G)$, $N_G(U)=(\cup_{v\in U}N_G(v))\setminus U$.
The \emph{closed neighborhood} $N_G[v]=N_G(v)\cup \{v\}$. Respectively, 
$N_G[U]=N_G(U)\cup U$. 
For   a set $U\subseteq V(G)$, $G-U$ denotes the subgraph of $G$ induced by $V(G)\setminus U$.
If $U=\{u\}$, we write $G-u$ instead of $G-\{u\}$.
The \emph{degree} of a vertex $v$ is denoted by $d_G(v)=|N_G(v)|$.
We say that a vertex $v$ is \emph{pendant} if $d_G(v)=1$.
 We can omit subscripts if it does not create confusion.
A vertex $v$ of a connected graph $G$ with at least 2 vertices is a \emph{cut vertex} if $G-u$ is disconnected. A connected graph $G$ is \emph{biconnected} if it has at least 2 vertices and has no cut vertices. A \emph{block} of a connected graph $G$ is an inclusion-maximal biconnected subgraph of $G$. A block is \emph{trivial} if it has exactly 2 vertices.  
We say that vertex set $X$ is connected if $G[X]$ is connected. 

A \emph{tree decomposition} of a graph $G$ is a pair $(\mathcal{B},T)$ where $T$
is a tree and $\mathcal{B}=\{B_{i} \mid i\in V(T)\}$ is a collection of subsets (called {\em bags})
of $V(G)$ such that 
\begin{itemize}
\item[i)] $\bigcup_{i \in V(T)} B_{i} = V(G)$, 
\item[ii)] for each edge $xy \in E(G)$, $x,y\in B_i$ for some  $i\in V(T)$, and 
\item[iii)] for each $x\in V(G)$ the set $\{ i \mid x \in B_{i} \}$ induces a connected subtree of $T$.
\end{itemize}
The \emph{width} of a tree decomposition $(\{ B_{i} \mid i \in V(T) \},T)$ is $\max_{i \in V(T)}\,\{|B_{i}| - 1\}$. The \emph{treewidth} of a graph $G$ (denoted as $\tw(G)$) is the minimum width over all tree decompositions of $G$. 

A set $U\subseteq V(G)$ is a \emph{vertex cover} of $G$ if for any edge $uv$ of $G$, $u\in U$ or $v\in U$. The \emph{vertex cover number} $\tau(G)$ is the size of a minimum vertex cover.

Parameterized complexity is a two dimensional framework
for studying the computational complexity of a problem. One dimension is the input size
$n$ and another one is a parameter $k$. It is said that a problem is \emph{fixed parameter tractable} (or \classFPT), if it can be solved in time $f(k)\cdot n^{O(1)}$ for some function $f$. 
A \emph{kernelization} for a parameterized problem is a polynomial algorithm that maps each instance $(x,k)$ with the input $x$ and the parameter $k$ to an instance $(x',k')$ such that i) $(x,k)$ is a yes-instance if and only if $(x',k')$ is a yes-instance of the problem, and ii) the size of $x'$ is bounded by $f(k)$ for a computable function $f$. 
The output $(x',k')$ is called a \emph{kernel}. The function $f$ is said to be a \emph{size} of a kernel. Respectively, a kernel is \emph{polynomial} if $f$ is polynomial. 
While a parameterized problem is \classFPT{} if and only if it has a kernel, it is widely believed that not all \classFPT{} problems have polynomial kernels. In particular, Bodlaender et al.~\cite{BodlaenderDFH09,BodlaenderJK14} introduced techniques that allow to show that a parameterized problem has no polynomial kernel unless  $\classNP\subseteq\classCoNP/\text{\rm poly}$.
We refer to the book of Downey and Fellows~\cite{DowneyF13},
 for  detailed introductions  to parameterized complexity. 

We use randomized algorithms for our problems. Recall that a \emph{Monte Carlo algorithm} is a randomized algorithm whose running time is deterministic, but whose output may be incorrect with a certain (typically small) probability. A Monte-Carlo algorithm is \emph{true-biased} (\emph{false-biased} respectively) if it always returns a correct answer when it returns a yes-answer (a no-answer respectively).

\section{\classFPT-algorithms for the problems parameterized by the solution size}\label{sec:FPT-k}

In this section we consider \textsc{Secluded Path} and \textsc{Secluded Steiner Tree} problems parameterized by the size of the solution, i.e., by $k$. We also show how these parameterized algorithms can be used to design faster exact exponential algorithms. 

We start with \textsc{Secluded Path}.

\begin{theorem}\label{lemma:normenum}
\textsc{Secluded Path} is  solvable in time 
$\cO(3^{k/3}\cdot n \log{\cost})$,
where $\cost$ is the maximum value of $\costfunction$ on  an input graph $G$.
\end{theorem}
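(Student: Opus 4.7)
The plan is to design a bounded-search-tree algorithm that grows a partial secluded $s$-$t$ path vertex by vertex. At every recursive call we maintain a partial path $P$ from $s$ to some vertex $v$, its closed neighborhood $B=N_G[V(P)]$, and the accumulated cost $\costfunction(B)$; we prune as soon as $|B|>k$ or $\costfunction(B)>C$, and report success when $v=t$. Since $V(P)\subseteq B$ we get $|V(P)|\le k$, so the recursion has depth at most $k$. The $\log \cost$ factor is absorbed into the arithmetic needed to compare $\costfunction(B)$ with $C$ at each node.

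At a branching node we must choose the next path vertex $u\in N_G(v)$. Among the candidates, we distinguish the \emph{interior} neighbors already in $B$ (which by themselves cost no new slot) from the \emph{exterior} ones; taking any $u$ enlarges $B$ by $\{u\}\cup(N_G(u)\setminus B)$. To obtain the $3^{k/3}$ bound I would use a pivoting rule in the spirit of Bron--Kerbosch: either terminate the extension at $v$, or branch on at most three carefully chosen candidates $u_1,u_2,u_3$. The accounting must show that each branch decreases the residual budget $k-|B|$ by at least three, so that the recurrence $T(k)\le 3\,T(k-3)+O(n\log \cost)$ yields a search tree of size $O(3^{k/3})$. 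Per node we need only incremental updates to the bitmask representing $B$ and to the running cost.

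The main obstacle is guaranteeing that each branching step really does consume at least three budget units. If the current endpoint $v$ has only one or two viable exterior extensions the clean $3\times 3$ tradeoff breaks; I would resolve this with reduction rules that either force a unique next vertex (no branching, no budget decrement needed) or collapse $v$ into a neighbor when its closed neighborhood is already dominated by $B$. In the remaining ``rich'' case, where $v$ has at least three exterior candidates, the pivot argument has to show that any chosen $u_i$ drags in at least two of its neighbors from $N_G[u_i]\setminus B$, which together with $u_i$ itself contributes the required drop of three in the residual budget. A secondary technicality is holding the per-node work to $O(n\log \cost)$ rather than $O((n+m)\log \cost)$: only edges incident to $v$ or to the newly admitted vertices need to be scanned, and those scans can be amortized along root-to-leaf branches because each edge enters $B$ at most once before the branch is pruned. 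Combining the search-tree bound with this per-node bookkeeping yields the promised running time $\cO(3^{k/3}\cdot n\log \cost)$.
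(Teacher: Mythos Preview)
Your plan has a genuine gap: the branching rule is not actually specified, and the claimed budget drop of three per branch is unsupported. You propose to ``branch on at most three carefully chosen candidates $u_1,u_2,u_3$'' via a Bron--Kerbosch style pivot, and then assert that each chosen $u_i$ ``drags in at least two of its neighbors from $N_G[u_i]\setminus B$''. Nothing in the setup forces this: $u_i$ could have degree one, or all of $N_G[u_i]$ could already lie in $B$, so the drop could be as small as one (or zero for interior neighbors). Your own proposal flags this as ``the main obstacle'' and offers only heuristic reduction rules, none of which are shown to cover all cases. As stated, the recurrence $T(k)\le 3\,T(k-3)$ is not established.

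The paper's argument avoids this difficulty entirely and rests on one observation you are missing: it suffices to enumerate \emph{induced} $(s,t)$-paths, since shortcutting never enlarges the closed neighborhood. Once you commit to induced paths, when you extend from the current endpoint $u$ to a neighbor $w$ you may \emph{delete} all of $N_G[u]\setminus\{w\}$ from the graph, because an induced path can never revisit any vertex adjacent to $u$. This yields a clean recursion: with $d=|N_G(u)|$ you branch into $d$ subproblems, each with budget reduced by exactly $d$, so $T(k)\le d\cdot T(k-d)$. This is the classical Moon--Moser recurrence, maximized at $d=3$, which gives $T(k)=\cO(3^{k/3})$ directly---no pivoting, no case analysis on the number of exterior candidates, and no need to argue that each individual branch consumes three slots.
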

\begin{proof}
Let us observe first that if there is an optimal secluded path, then there is an optimal secluded induced path---shortcutting a path cannot increase the size of its neighbourhood.
 We give an algorithm that enumerates all induced paths $P$ from $u$ to $v$ such that $|N_G[V(P)]|\leq k$ in time 
$O(3^{k/3}\cdot n)$ for a graph $G$ with $n$ vertices.
 Then picking up a secluded path  of minimum cost will complete the proof. 

The algorithm is based on the standard branching ideas.
If $|N_G[u]|>k$ the algorithm reports that no such path exist and stops.
If $|N_G[u]|\leq k$ and $u=v$ the algorithm outputs the path consisting of the single vertex~$u$. Otherwise a path from $u$ to $v$ must go through one of the neighbors of~$u$. Since we are looking for an induced path it must never return to a vertex from $N_G[u]$. This allows us to branch as follows. For each $w \in N_G(u)$, 
we check recursively whether the graph $G_w=(G \setminus N_G[u]) \cup \{w\}$ contains
an induced path $Q$ from $w$ to $v$ such that $|N_{G_w}[Q]|\leq k-|N_G(u)|$. This way we get the following recurrence on the number of nodes $t(k)$ in the corresponding recursion tree. If $u=v$, then there is only one path from $u$ to $v$, and $t(k)\leq 1$. If $u\neq v$,  then 
$t(k) \le  d\cdot t(k-d)$,
where $d=|N_G(u)|$.
This is a well known recurrence implying that $t(k)=\cO(3^{k/3})$
(see, e.g., the analysis of the algorithm enumerating all maximal independent sets
in Chapter~1 of \cite{FominKratsch2010}). 
Note that we spend only a linear time $O(n)$ in each vertex of the recursion tree.
Since the length of each path $P$ can be computed in time $\cO(n\log\cost)$, we can find a path of minimum cost it time $\cO(3^{k/3}n\log\cost)$. Therefore, the total running time is $\cO(3^{k/3}\cdot n \log{\cost})$.
\end{proof}

 For \textsc{Secluded Steiner Tree} we prove the following theorem.

\begin{theorem}\label{thm:sec_tree}
\textsc{Secluded Steiner Tree} can be solved in time $\cO(2^{k}\cdot (n+m) k^2 \log{\cost})$,
where $\cost$ is the maximum value of $\costfunction$ on  an input graph $G$.
\end{theorem}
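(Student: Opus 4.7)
The plan is to extend the branching from Theorem~\ref{lemma:normenum} to the Steiner setting by growing a candidate tree vertex by vertex, starting from an arbitrary fixed terminal $s_1\in S$. The search state is a pair of disjoint vertex sets $(A,F)$: $A$ contains the vertices committed to $V(T)$, while $F$ contains the vertices committed to the boundary $N_G[V(T)]\setminus V(T)$. Initialize $A=\{s_1\}$ and $F=\emptyset$ and maintain the invariants that $G[A]$ is connected, $F\subseteq N_G(A)$, and hence $A\cup F\subseteq N_G[A]$. A branch is pruned as soon as $|N_G[A]|>k$, or $\costfunction(N_G[A])>C$, or a terminal lands in $F$.

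The recursive step computes the undecided boundary $U=N_G(A)\setminus(A\cup F)$. If $U=\emptyset$, then $N_G[A]=A\cup F$ is fully determined; if additionally $S\subseteq A$, record $\costfunction(A\cup F)$ as a candidate solution, otherwise prune. If $U\neq\emptyset$, pick any $v\in U$ and branch into the two subproblems $(A\cup\{v\},F)$ and $(A,F\cup\{v\})$. The invariants are preserved in both branches because $v$ is adjacent to some vertex already in $A$. Completeness follows by shadowing a hypothetical optimum tree $T$ containing $s_1$: the branch that places $v\in A$ iff $v\in V(T)$ satisfies all invariants and is never pruned, and its leaf must have $A=V(T)$, since any vertex of $V(T)\setminus A$ would, by connectivity of $V(T)$, lie in $N_G(A)\setminus(A\cup F)=U$, contradicting the stopping condition.

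For the running time, each recursive call either moves one new vertex into $A$ or into $F$, so $|A\cup F|$ strictly increases, and since $A\cup F\subseteq N_G[A]$, the pruning rule $|N_G[A]|\leq k$ enforces $|A\cup F|\leq k$ throughout; the branching tree therefore has depth at most $k$ and at most $2^k$ leaves. Per-vertex counters $c(v)=|N_G(v)\cap A|$ are maintained so that $|N_G[A]|$ and $\costfunction(N_G[A])$ can be updated incrementally whenever a vertex is added to $A$. Accounting for the cost of selecting the next undecided vertex, checking the stopping and pruning conditions, and handling numerical costs (each arithmetic operation costing $O(\log\cost)$), the per-node work is bounded by $O((n+m)k^2\log\cost)$, yielding the announced $O(2^k\cdot(n+m)k^2\log\cost)$ bound.

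The main obstacle I expect is the completeness argument: one needs to be sure that the stopping condition $U=\emptyset$ really forces $A$ to equal the shadowed tree $V(T)$, rather than leaving unexplored portions of $V(T)$ behind. This reduces to exploiting the connectivity of $V(T)$: any hypothetical vertex of $V(T)\setminus A$ would be adjacent to some vertex of $A$ (since $A\cap V(T)$ is a nonempty proper subset of the connected set $V(T)$), hence would belong to $U$, which is impossible when the algorithm declares the state final.
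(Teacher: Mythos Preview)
Your proof is correct, and the underlying algorithm coincides with the paper's. The paper simply invokes Proposition~\ref{le:connectedComp} (Fomin--Villanger), which enumerates all connected vertex sets $B\ni s_1$ with $|B|=b+1$ and $|N_G(B)|=f$ in time $\cO\bigl(\binom{b+f}{b}(n+m)\,b(b+f)\bigr)$, and then sums over all $b,f$ with $b+f\le k$. Your two-way branching on an undecided boundary vertex---commit it to $A$ or to $F$---is precisely the standard proof of that enumeration bound, so you have unfolded the black box rather than taken a genuinely different route. Your completeness argument (shadowing an optimal $T$, using connectivity of $V(T)$ to force $A=V(T)$ at the leaf) and your depth bound (via $|A\cup F|\le|N_G[A]|\le k$, since $F\subseteq N_G(A)$ is an invariant) are both sound; the per-node cost estimate is loose but certainly suffices for the stated bound.
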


The following proposition from \cite{FominV12} will be useful for us.

\begin{proposition}[\cite{FominV12}]\label{le:connectedComp} Let $G$ be a graph. For every $v\in
V(G)$, and $b,f\geq 0$,  the number of connected vertex subsets
$B\subseteq V(G)$ such that
\begin{itemize}
\item[(i)]
 $v \in B$,\item[(ii)] $|B| = b+1$, and \item[(iii)] $|N_G(B)|=f$,  \end{itemize}
  is at most $\binom{b+f}{b}$.
 Moreover, all such subsets can be enumerated in time $\cO (\binom{b+f}{b}\cdot (n+m) \cdot b\cdot ( b+ f))$.
\end{proposition}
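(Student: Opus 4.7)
The plan is to reduce \textsc{Secluded Steiner Tree} to an enumeration task for which Proposition \ref{le:connectedComp} provides exactly the right bound. First I would observe that we may assume the sought connected subgraph $T$ is a tree: only the vertex set $B=V(T)$ matters for both the closed-neighbourhood size and the cost, and if $G[B]$ is connected any spanning tree of $G[B]$ witnesses a yes-answer. So the problem becomes: decide whether there is a connected set $B\subseteq V(G)$ with $S\subseteq B$, $|N_G[B]|\le k$, and $\costfunction(N_G[B])\le C$.

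Next, since every such $B$ must contain $s_1\in S$, I would anchor the enumeration at $s_1$. Writing $|B|=b+1$ and $|N_G(B)|=f$, the budget $|N_G[B]|\le k$ forces $b+f\le k-1$. For each of the at most $k^2$ pairs $(b,f)$ satisfying this inequality, invoke Proposition \ref{le:connectedComp} with $v=s_1$ to enumerate all connected sets $B$ with $s_1\in B$, $|B|=b+1$, $|N_G(B)|=f$; for every enumerated $B$, test in $O(k)$ time whether $S\subseteq B$, and compute $\costfunction(N_G[B])$ by summing at most $k$ integers bounded by $\cost$ (cost $O(k\log\cost)$). Return yes iff some $B$ passes both tests.

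For the running time I would first bound the number of candidates: summing the proposition's bound $\binom{b+f}{b}$ over all pairs with $b+f=s$ equals $2^{s}$ by the binomial identity $\sum_{b=0}^{s}\binom{s}{b}=2^{s}$, so the total number of enumerated sets over $s=0,\dots,k-1$ is at most $2^{k}-1$. Proposition \ref{le:connectedComp} produces them in time $\cO\bigl(\sum_{b+f\le k-1}\binom{b+f}{b}(n+m)\cdot b(b+f)\bigr)=\cO(2^{k}(n+m)k^{2})$, and the per-set verification adds $\cO(k+k\log\cost)$, giving the claimed $\cO(2^{k}(n+m)k^{2}\log\cost)$ total.

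The only real obstacle I anticipate is justifying completeness: I must argue that every feasible solution $B$ appears in the enumeration. This is immediate because $s_1\in B$ (so the anchor is correct), $B$ is connected by assumption, and the pair $(|B|-1,|N_G(B)|)$ is one of the pairs we iterate over. Everything else is bookkeeping and standard arithmetic on cost values, so the proof should be short once the reduction to Proposition \ref{le:connectedComp} is set up.
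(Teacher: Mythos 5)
You have not proven the stated Proposition; you have instead proven Theorem~\ref{thm:sec_tree}, which is the theorem that \emph{applies} the Proposition to solve \textsc{Secluded Steiner Tree}. The Proposition itself is a purely combinatorial claim about an arbitrary graph $G$ and arbitrary vertex $v$: that the number of connected vertex sets $B$ with $v\in B$, $|B|=b+1$, and $|N_G(B)|=f$ is at most $\binom{b+f}{b}$, together with an enumeration algorithm. Nothing in your writeup establishes this bound; you take it as a black box (``invoke Proposition~\ref{le:connectedComp} with $v=s_1$'') and carry out bookkeeping on top of it. The paper itself also does not prove this Proposition --- it cites it verbatim from Fomin and Villanger~\cite{FominV12} --- so it is not a statement whose proof is expected to live here, but in any case your argument answers a different question.

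For reference, a proof of the Proposition would go by a branching argument: maintain a connected set $B\ni v$ and a ``forbidden'' set $F\subseteq N_G(B)$ of vertices committed to lying in the external boundary. Start with $B=\{v\}$, $F=\emptyset$. While $|B|\le b$ and $|F|\le f$ and $N_G(B)\setminus F\neq\emptyset$, pick any $u\in N_G(B)\setminus F$ and branch on whether $u\in B$ or $u\in F$. Each branch increases $|B|+|F|$ by one; a leaf corresponding to a target set $B^\*$ is reached after exactly $b$ ``put in $B$'' choices and at most $f$ ``put in $F$'' choices interleaved, and the number of such leaves is at most $\binom{b+f}{b}$. Distinct target sets $B^\*$ correspond to distinct leaves because the sequence of decisions determines $B^\*$. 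The claimed running time follows by charging each node of the branching tree $\cO((n+m))$ for maintaining $N_G(B)$ plus bookkeeping of size $\cO(b(b+f))$. That is the argument you were asked to supply.
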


\begin{proof}[Proof of Theorem~\ref{thm:sec_tree}]
By Proposition~\ref{le:connectedComp}, the number of  connected sets $T$ of size $b$ 
containing $s_1$ and such that $|N_G[T]|=b+f$, does not exceed  
$\binom{b+f}{b}$. Since $b+f\leq k$ and there are at most $k^2$ choices for the values of $b$ and $f$, we have that the number of such sets does not exceed
 $\binom{b+f}{b} k^2$.
By Proposition~\ref{le:connectedComp}, all such sets $N_G[T]$ can be enumerated in time $2^{k}\cdot   (n+m)) k^2$.  While 
  enumerating  sets $N_G[T]$, we disregard  sets not containing all terminal vertices. Finally, we select the set of minimum cost. 
 \end{proof}

Parameterized algorithms for  \textsc{Secluded Path} and \textsc{Secluded Steiner Tree}  combined with a brute-force procedure imply the following exact exponential algorithms for the problems. 

\begin{theorem}\label{thm:exact_secluded} On an $n$-vertex graph, 
\textsc{Secluded Path} is solvable in time $\cO(1.3896^n\cdot\log\cost)$ and 
\textsc{Secluded Steiner Tree} is solvable  in time $\cO(1.7088^n\cdot \log\cost)$,
where $\cost$ is the maximum value of $\costfunction$ on  an input graph $G$.
\end{theorem}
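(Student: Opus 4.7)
The plan is to hybridise the FPT algorithms of Theorems~\ref{lemma:normenum} and~\ref{thm:sec_tree} with a brute-force enumeration of the ``outside'' set $V(G)\setminus N_G[V(T)]$, then to balance the two running times via a threshold parameter $\alpha$.

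\emph{Brute force for a guessed exposure $k'$.} I enumerate every $Y\subseteq V(G)$ with $|Y|=n-k'$; for each such $Y$ I set $X=V(G)\setminus Y$ and $Z=V(G)\setminus N_G[Y]$, and accept iff $\costfunction(X)\leq C$ and all terminals lie in one connected component of $G[Z]$. The check is correct because no vertex of $Z$ has a neighbour in $Y$, so any connected subgraph $T\subseteq G[Z]$ containing the terminals satisfies $N_G[V(T)]\subseteq X$, which certifies a YES-instance; conversely, any valid $T^*$ with $|N_G[V(T^*)]|=k'$ produces $Y^*=V(G)\setminus N_G[V(T^*)]$ of size $n-k'$ with $V(T^*)\subseteq V(G)\setminus N_G[Y^*]$, so some enumerated $Y$ is accepted. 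Each iteration costs $\cO(\binom{n}{n-k'}(n+m)\log{\cost})$ time.

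\emph{Hybridisation and balancing of $\alpha$.} Fix $\alpha\in(0,1)$. Run the FPT algorithm of Theorem~\ref{lemma:normenum} (resp.\ Theorem~\ref{thm:sec_tree}) once with the exposure parameter capped at $\lfloor\alpha n\rfloor$: this catches every solution of exposure at most $\alpha n$ in time $\cO(3^{\alpha n/3}\cdot\mathrm{poly}(n)\log{\cost})$ (resp.\ $\cO(2^{\alpha n}\cdot\mathrm{poly}(n)\log{\cost})$). For each remaining $k'\in\{\lfloor\alpha n\rfloor+1,\ldots,k\}$, invoke the brute force above; since the enumerated sets $Y$ all have size strictly less than $(1-\alpha)n$, the total enumeration cost is at most $\sum_{s<(1-\alpha)n}\binom{n}{s}\leq\mathrm{poly}(n)\cdot 2^{H(1-\alpha)n}$, where $H$ denotes the binary entropy function. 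Choose $\alpha$ so that the two exponential factors coincide: $3^{\alpha/3}=2^{H(1-\alpha)}$ for \textsc{Secluded Path}, numerically $\alpha\approx 0.898$ and base $\approx 1.3896$; and $2^{\alpha}=2^{H(1-\alpha)}$, i.e.\ $\alpha=H(1-\alpha)$, for \textsc{Secluded Steiner Tree}, numerically $\alpha\approx 0.773$ and base $\approx 1.7088$.

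The main obstacle is precisely this numerical balancing, together with the routine check that the polynomial-in-$n$ overhead can be absorbed into the stated bases by an infinitesimal inflation of $\alpha$; all other ingredients reduce straightforwardly to Theorems~\ref{lemma:normenum} and~\ref{thm:sec_tree} and to the correctness argument for the complement enumeration above.
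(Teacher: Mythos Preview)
Your proposal is correct and follows essentially the same approach as the paper: hybridise the FPT algorithm parameterised by $k$ with a brute-force enumeration of the complement of the exposure set, and balance the threshold. Your correctness check for the complement enumeration via $Z=V(G)\setminus N_G[Y]$ matches the paper's argument, and your balancing via the binary-entropy bound $\binom{n}{(1-\alpha)n}\le 2^{H(1-\alpha)n}$ is just a restatement of the paper's comparison $3^{\varepsilon n/3}$ (resp.\ $2^{\varepsilon n}$) versus $\binom{n}{(1-\varepsilon)n}$, yielding the same thresholds $\alpha\approx 0.8983$ and $\alpha\approx 0.773$.
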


\begin{proof}
By Theorem~\ref{lemma:normenum},  \textsc{Secluded Path}  is solvable in time $3^{k/3}\cdot n\log\cost$. 
 On the other hand,  we also can solve the problem 
  by the  brute-force procedure checking 
for every set $X$ of size  $n-i$, 
whether $V(G)\setminus X $ and $\costfunction(V(G)\setminus X)\leq C$.
Notice that  $V(G)\setminus X$  contains the closed neighborhood of a secluded path if and only if  $V(G)\setminus N_G[X]$  is connected and contains both  terminal vertices, and these conditions can be checked in polynomial time.
The brute-force procedure takes  time $\binom{n}{n-i} \cdot  n^{\cO(1)}\log\cost$.

Let us note that for $\varepsilon\geq 0.8983$, $3^{\varepsilon n/3 }>\binom{n}{(1-\varepsilon)n }$.
Thus  for all integers $i$ between  $0.8983 \cdot n$ and $n$, we enumerate sets of size $n-i$, while for all integers $i$ between  $1$ and $0.8983 \cdot n$ we use Theorem~\ref{lemma:normenum}  to find if there is a solution of size at most $i$. The running time of the algorithm is dominated by 
$\cO(3^{\frac{0.8983 n}{3}}\cdot\log\cost)=\cO(1.3896^n\cdot\log\cost)$.


Similarly, we use parameterized  time   $2^{k}\cdot n^{\cO(1)}\log\cost$ algorithm from Theorem~\ref{thm:sec_tree} for   \textsc{Secluded Steiner Tree} and balance it with  
 the  brute-force procedure checking 
for every set $X$ of size  $n-i$, 
whether $V(G)\setminus X $ is the closed neighbourhood of a secluded Steiner tree $T$. 
For each such set $X$, we  check in polynomial time 
whether $V(G)\setminus N_G[X]$  is connected and contains all terminal vertices.
The brute-force runs in time $\binom{n}{n-i} \cdot  n^{\cO(1)}\log\cost$.

For $\varepsilon\geq 0.77923$, we have that  $2^{\varepsilon n }>\binom{n}{(1-\varepsilon)n}$. Thus  for all integers $i$ between  $0.77923\cdot n$ and $n$, we enumerate sets of size $n-i$, while for all integers $i$ between  $1$ and $0.77923\cdot n$ we use Theorem~\ref{thm:sec_tree} to find if there is a solution of size at most $i$. The running time of this algorithm is  $\cO(2^{0.77923 n}\cdot \log\cost)=\cO(1.7088^n\cdot\log\cost)$.
\end{proof}

\section{\classFPT-algorithms for the problems parameterized above the guaranteed value}\label{sec:FPT-above}
In this section we show that \textsc{Secluded Path} and \textsc{Secluded Steiner Tree} are \classFPT{} when the problems are parameterized by $r+p$
where $r=k-\ell$ and $\ell$ is the size of a Steiner tree for $S$.

\begin{theorem}\label{lemma:pathaboveguarantee}
\textsc{Secluded Path} is  solvable in time $\cO(2^{k-\ell}\cdot (n+m) \log{\cost})$, where $\ell$ is the length of a shortest $(u,v)$-path for $\{u,v\}=S$ and $\cost$ is the maximum value of $\costfunction$ on  an input graph $G$.
\end{theorem}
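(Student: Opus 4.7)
The plan is to strengthen the branching of Theorem~\ref{lemma:normenum} by exploiting BFS distances from $v$: an optimal solution deviates from a shortest $(u,v)$-path by at most $r=k-\ell$ edges where the BFS distance to $v$ does not strictly decrease. First I would compute $d_v(x)$ for every $x\in V(G)$ in linear time, returning NO if $d_v(u)\ge k$. As in Theorem~\ref{lemma:normenum}, I may restrict attention to induced $(u,v)$-paths. For any induced $(u,v)$-path $P=(x_0,\dots,x_m)$ satisfying $|V(P)|\le|N_G[V(P)]|\le k$, the sequence $d_v(x_i)$ starts at $d_v(u)=\ell-1$, ends at $0$, and changes by exactly $\pm 1$ or $0$ along each edge, so the count of \emph{non-forward} edges (those with $d_v(x_{i+1})\ge d_v(x_i)$) equals the excess length $m-(\ell-1)\le r$.

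Second, I would run a recursive branching procedure whose state is the current endpoint $x$, the exposure set $Z=N_G[\text{partial path}]$, and a remaining non-forward budget $b\le r$. When $x=v$, record the solution if $\costfunction(Z)\le C$. Otherwise, branch over candidate next vertices $y\in N_G(x)\setminus Z$, which preserves inducedness and correct exposure accounting: a ``forward'' choice with $d_v(y)<d_v(x)$ keeps $b$ unchanged, while a ``non-forward'' choice decrements $b$ (pruning when $b<0$ or when $|Z\cup N_G[y]|>k$). To reach the $\cO(2^r)$ branching bound rather than a bound involving vertex degrees, I would collapse all forward branching into a polynomial-time subroutine on the BFS DAG from $v$: it extends $x$ via a shortest sub-path to the next non-forward pivot, choosing among forward shortest-path extensions the one that minimizes the incremental exposure. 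The genuine branching then happens only at the at most $r$ non-forward transitions, at each of which the decision reduces to a binary ``take a non-forward detour now or continue forward'', with the selection of the particular non-forward neighbor absorbed into the polynomial overhead via enumeration. The resulting recursion tree has $\cO(2^r)$ leaves, each performing $\cO((n+m)\log\cost)$ work for BFS, the forward-extension DP, and cost comparisons.

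The main obstacle is the correctness of the forward-extension subroutine: because the exposure $|N_G[\cdot]|$ is not additive across path vertices due to overlapping closed neighborhoods, a naive shortest-path Dijkstra minimizing summed vertex weights may overcount. I would resolve this with a layered DP on the BFS DAG from $v$ that sweeps the levels in one direction and, for each vertex of the current level, records the minimum \emph{additional} exposure relative to the fixed exposure set $Z$; an exchange argument then shows that rerouting among forward shortest paths between two non-forward pivots does not increase the non-forward count nor violate inducedness, so committing to the canonical forward extension loses no optimal solution. Combining this subroutine with the $2^r$-branching on non-forward transitions gives the claimed running time $\cO(2^{k-\ell}\cdot (n+m)\log\cost)$.
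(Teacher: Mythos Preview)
Your proposal has a genuine gap, and the paper's argument is both simpler and quite different.

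The fatal problem is the ``forward-extension subroutine''. You correctly identify that exposure is not additive, but the fix you sketch does not repair this. A layered DP on the BFS DAG that stores, for each vertex, the minimum additional exposure relative to a fixed set $Z$ is not well-defined: the additional exposure contributed by a forward segment depends on the internal overlaps among neighborhoods of its own vertices, not only on overlaps with $Z$, so the value you want to record at a vertex depends on the entire forward prefix taken to reach it, not just on the vertex. Encoding that prefix information blows the state space up. Your exchange argument also fails: two different forward shortest sub-paths between the same pivots can expose different vertex sets, and the one of smaller incremental exposure may expose a vertex whose presence in $Z$ later forbids (via the induced-path constraint) the only continuation that the optimal solution needs. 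So committing greedily to a canonical forward extension can lose all optimal solutions.

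Separately, the branching bound $2^r$ is not achieved by what you describe. A binary choice ``continue forward vs.\ make a non-forward step now'' still leaves open \emph{which} non-forward neighbor to move to; saying this is ``absorbed into the polynomial overhead via enumeration'' multiplies the branching by the non-forward degree at each pivot, not by a constant, and there is no mechanism in your outline that recovers a factor $2$ per non-forward step.

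For contrast, the paper does not use BFS layers or any forward/non-forward distinction at all. It reuses exactly the branching of Theorem~\ref{lemma:normenum} and simply analyzes it with a sharper parameter: for a target of $h$ path vertices, track the \emph{open} neighborhood budget $k-h$ instead of the closed budget $k$. At a vertex of degree $d$ one makes $d$ recursive calls while the open budget drops by $d-1$ (one neighbor becomes the next path vertex, the other $d-1$ are charged to the open neighborhood), yielding $T(p)\le d\cdot T(p-d+1)$, whose worst case $d=2$ gives $T(p)=\cO(2^{p})$. Summing over the possible values of $h$ then gives $\cO(2^{k-\ell})$. No DP on a BFS DAG and no exchange argument are needed.
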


\begin{proof}
The proof of this theorem is very similar to the proof of Theorem~\ref{lemma:normenum}. For an integer $h$, 
we  enumerate in  the graph $G$ 
  all induced paths $P$ from $u$ to $v$ of length at most $h-1$ such that $|N_G(V(P))|\leq k-h$. The only difference with Theorem~\ref{lemma:normenum} is that this time  we bound the running time of the algorithm as a function of $k-h$.

If $|N_G(u)|>k-h$ the algorithm reports that no such path exist and stops.
If $|N_G[u]|\leq  k-h$ and $u=v$ the algorithm outputs the path consisting of the single vertex~$u$. Otherwise, we  branch by checking recursively for each $w \in N_G(u)$, 
  whether the graph $G_w=(G \setminus N_G[u]) \cup \{w\}$ contains
an induced path $Q$ from $w$ to $v$ of length at most $h-1$ such that $|N_{G_w}(Q)|\leq k-|N_G(u)| - (h-1)$. This way we get the following recurrence on the number of nodes $T(k-h)$ in the corresponding recursion tree. If $u=v$, then there is only one path from $u$ to $v$, and $T(k-h)\leq 1$. If $u\neq v$,  then 
\[T(k-h) \le  d\cdot T(k-d-h+1) \, ,\]
where $d=|N_G(u)|$.
It is easy to show, that 
  $T(k-h)=\cO(2^{k-h})$.
\end{proof}

We need some structural properties of solutions of \textsc{Secluded Steiner Tree}.
We start with an auxiliary lemma bounding the number of vertices of degree at least three in $F$ as well as the number of their neighbors. 

\begin{lemma}\label{lem:big-degree}
Let $G$ be a connected graph and $S\subseteq V(G)$, $p=|S|$. Let $F$ be an inclusion minimal induced subgraph of $G$ such that $S\subseteq V(F)$ and $X=\{v\in V(F)|d_F(v)\geq 3\}\cup S$. Then
\begin{itemize}
\item[i)] $|X|\leq 4p-6$, and  
\item[ii)] $|N_F(X)|\leq 4p-6$.
\end{itemize}
 \end{lemma}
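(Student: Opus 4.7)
The plan is to exploit the inclusion-minimality of $F$ in order to extract a tree-like skeleton on which a standard Steiner-type counting applies.

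First I would establish the key structural consequence of minimality: every vertex $v\in V(F)\setminus S$ must be a cut vertex of $F$, since otherwise $G[V(F)\setminus\{v\}]$ would still be a connected induced subgraph containing $S$, contradicting minimality. In particular, every pendant of $F$ lies in $S$, and every connected component of $F-v$ (for any non-terminal $v$) contains at least one terminal (else a whole pocket could be excised). I would then form the topological reduction $H$ of $F$ by suppressing every degree-$2$ non-terminal vertex, so that $V(H)=X$, $H$ is simple and connected, every non-terminal vertex of $H$ has $H$-degree at least $3$ and remains a cut vertex of $H$, and each \emph{long} edge of $H$ (one arising from a path in $F$ with at least one internal degree-$2$ non-terminal vertex) is a bridge of $H$, because otherwise an internal vertex of such a path would fail to be a cut vertex of $F$. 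Contracting each connected component of the short-edge subgraph of $H$ to a super-node yields a super-tree $\mathcal{T}$ whose edges are exactly the long edges of $H$.

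A case analysis based on minimality then establishes two structural claims: (a) every leaf super-node of $\mathcal{T}$ contains at least one terminal, for otherwise the entire pocket of non-terminal vertices it encloses could be discarded; and (b) every non-terminal super-node of $\mathcal{T}$ has $\mathcal{T}$-degree at least $3$, for otherwise some vertex inside it would have too few $H$-incidences to meet the $d_H\geq 3$ requirement, or would fail to be a cut vertex of $H$. Combined with the classical identity that a tree with $L$ leaves has at most $L-2$ nodes of degree at least $3$, these yield at most $p$ leaf super-nodes, at most $p-2$ non-terminal super-nodes, and therefore at most $2p-2$ super-nodes in total, giving at most $2p-3$ long edges.

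Part (ii) then follows quickly: the subgraph of $F$ induced by $V(F)\setminus X$ has maximum degree $2$ and, by the ``every component has a terminal'' observation, contains no cycles; hence it decomposes into disjoint paths, one per long edge of $H$, and each such path contributes at most two vertices to $N_F(X)$ (its two endpoints), so $|N_F(X)|\leq 2(2p-3)=4p-6$. For part (i), I would apply a degree-sum inequality within each super-node -- using $d_H(v)\geq 3$ for non-terminals and accounting for the short-edge and long-edge incidences -- to bound the number of vertices that a single super-node can contain; summing these bounds over $\mathcal{T}$ together with the super-node counts above gives $|X|\leq 4p-6$. The main obstacle is in rigorously verifying the structural claim (b) about non-terminal super-nodes and then executing the per-super-node size accounting cleanly, since both steps require a careful propagation of the minimality of $F$ through the topological reduction.
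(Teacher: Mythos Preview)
Your approach differs from the paper's: the paper works with the block--incidence tree of $F$ (the bipartite tree on $V(F)\cup\mathcal B$ where $v$ is adjacent to block $b$ iff $v\in b$), dissolves degree-$2$ non-terminal nodes to obtain a tree $T'$ with $|V(T')|\le 2p-2$, and then shows directly that every vertex of $X$ can be injected into an edge-endpoint slot of $T'$, giving $|X|\le 2|E(T')|\le 4p-6$; part (ii) follows by the same injection. Your route via the topological minor $H$ and the super-node tree $\mathcal T$ is a natural alternative and it handles part (ii) cleanly: long edges are bridges, $\mathcal T$ has at most $2p-3$ edges, and each long edge contributes at most two vertices to $N_F(X)$.

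The gap is in part (i). A degree-sum inequality inside a super-node does \emph{not} bound $|B|$: the short-edge graph $H[B]$ can have arbitrarily many edges, so $3|B|\le 2|E_{\text{short}}(B)|+d(B)$ gives no control on $|B|$ by itself. What actually constrains $|B|$ is the cut-vertex condition---every non-terminal in $B$ that is not a long-edge endpoint must be a cut vertex of $H[B]$---and exploiting this forces you to look at the block structure of $H[B]$, which is exactly the paper's mechanism transplanted inside each super-node. Concretely, a super-node such as a caterpillar (a path $v_1\cdots v_k$ with a pendant $w_i$ at each $v_i$) has $|B|=2k$ while the degree-sum inequality sees only $\sum d_{H[B]}\le 2(2k-1)$; the bound on $|B|$ comes not from degree sums but from the fact that every $w_i$ (a non-cut vertex of $H[B]$) must be a terminal or carry long edges, which is a block-tree counting, not a handshake argument. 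So your sketch for (i) is missing the key idea; to finish you would essentially have to rebuild the paper's block--incidence tree argument, either globally on $F$ or locally inside each super-node.

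Your claim (b) is true and worth proving once, since it is the crux: if a non-terminal super-node $B$ had $\mathcal T$-degree at most $2$, then every vertex of $B$ not incident to a long edge would have to be a cut vertex of $H[B]$ of degree $\ge 3$; a short block-cut-tree analysis of $H[B]$ (looking at leaf blocks, each of which must contain a non-cut vertex that is therefore long-edge-incident) then forces enough long-edge endpoints to contradict $d(B)\le 2$. But note that establishing (b) already uses the block structure, which is the tool you need for (i) anyway.
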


\begin{proof}
Let $\mathcal{B}$ be the set of blocks of $F$. Consider  bipartite graph $T$ with the bipartition $(V(F),\mathcal{B})$ of the vertex such that $v\in V(F)$ and $b\in\mathcal{B}$ are adjacent if and only if $v$ is a vertex of $b$. Notice that $T$ is a tree.
Recall that the \emph{vertex dissolution} operation for a vertex $v$ of degree 2 deletes $v$ together with incident edges and replaces them by the edge joining the neighbors of $v$. 
 Denote by $T'$ the tree obtained from $T$ by consequent
  dissolving all  vertices of $T$ of degree 2 that are not in $S$.
Denote by $L$ the set of leaves of $T$. By the minimality of $F$, $L\subseteq S$. Let $q_1=|L|\leq p$, and let $q_2$ be the number of degree 2 vertices and $q_3$ be the number of vertices of degree at least 3 in $T$. Clearly, $q_1+2q_2+3q_3\leq 2|E(T)|=2(q_1+q_2+q_3-1)$. Then $q_3\leq q_1-2\leq p-2$. 
We have that $|\{v\in V(T)|d_T(v)\geq 3\}\cup S|\leq q_3+p\leq 2p-2$ and $|V(T')|\leq 2p-2$.
Observe that if $d_F(v)\geq 3$ for $v\in V(F)\setminus S$, then $v$ is a cut vertex of $F$ and either $v$ is included in at least 3 blocks of $F$, or $v$ is in a block of size at least 3. In the second case, $v$ is adjacent to a vertex $b\in\mathcal{B}$ of $T$ with degree at least 3. It implies that
$|X|\leq 2|E(T')|=2(|V(T')|-1)\leq 4p-6$ and we have (i). To show (ii), observe that 
$|N_F(X)|\leq 2|E(T')|\leq 4p-6$.  
\end{proof}

The following lemma provides a bound on  the number of vertices of a tree  
that have neighbors outside the tree.

\begin{lemma}\label{lem:adj}
Let $G$ be a connected graph and $S\subseteq V(G)$, $p=|S|$. Let $\ell$ be the size of a Steiner tree for $S$ and $r$ be a positive integer. Suppose that $T$ is an inclusion minimal subgraph of $G$ such that $T$ is a tree spanning $S$ and $|N_G[V(T)]|\leq \ell+r$.
Then for $Y=N_G(V(T))$, $|N_G(Y)\cap V(T)|\leq 4p+2r-5$.
\end{lemma}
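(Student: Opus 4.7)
The plan is to combine a structural analysis of the tree $T$ with a rerouting argument driven by the minimality of $\ell$, and then to charge each vertex of $N_G(Y)\cap V(T)$ either to the skeleton of $T$ or to a vertex of $Y$.

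First I would extract structural control from the inclusion-minimality of $T$. Deleting a non-$S$ leaf of $T$ yields a strictly smaller subgraph still satisfying the hypotheses, so every leaf of $T$ lies in $S$; the standard tree identity $q_3\le q_1-2$ then bounds the number of vertices of $T$ of degree at least three by $p-2$. Setting $X=\{v\in V(T):d_T(v)\ne 2\}\cup S$, we obtain $|X|\le 2p-2$, and $V(T)\setminus X$ is the disjoint union of the interiors of at most $|X|-1\le 2p-3$ paths $P_e$, one per edge of the topological minor $T'$ obtained from $T$ by dissolving degree-two non-terminal vertices. The two hypotheses translate into $|V(T)|\ge\ell$ (since $T$ spans $S$) and $|V(T)|+|Y|=|N_G[V(T)]|\le\ell+r$, which give $|Y|\le r$ and $\beta:=|V(T)|-\ell\le r-|Y|$.

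The core step is a rerouting lemma that exploits the minimality of $\ell$. For $y\in Y$ let $N_y=N_G(y)\cap V(T)$, and let $T_y\subseteq T$ be the minimal subtree of $T$ containing $N_y$. Replacing $T_y$ in $T$ by the star centered at $y$ with leaves $N_y$ produces a tree $T''\subseteq G$ spanning $S$ with $|V(T'')|=|V(T)|-(|V(T_y)|-|N_y|)+1$, provided we handle with a short case analysis any internal vertex of $T_y$ that happens to belong to $S$. Since $\ell\le |V(T'')|$, we deduce $|V(T_y)|-|N_y|\le\beta+1$, so $T_y$ has at most $\beta+1\le r-|Y|+1$ Steiner points outside $N_y$; and applying $q_1\le q_3+2$ inside $T_y$ yields $|N_y|\le\beta+3$ per single $y$.

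To finish, I partition $B:=N_G(Y)\cap V(T)=(B\cap X)\sqcup(B\setminus X)$. Immediately $|B\cap X|\le|X|\le 2p-2$. For $B\setminus X$, whose elements are degree-two non-terminal vertices each lying in the interior of some path $P_e$, I would use a charging scheme: each such vertex is charged either to its containing path (each of the $\le 2p-3$ paths absorbing one ``baseline'' vertex) or to a vertex of $Y$ (each $y$ absorbing at most two further charges, via the tree identity applied to $T_y$ together with the Steiner-point bound above). This gives $|B\setminus X|\le(2p-3)+2|Y|\le 2p+2r-3$, whence $|N_G(Y)\cap V(T)|\le(2p-2)+(2p+2r-3)=4p+2r-5$.

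The main obstacle will be the charging in the final step: the same internal path vertex may lie in $T_y$ for several different $y\in Y$, so one must separate cleanly the per-path ``baseline'' contribution from the per-$y$ Steiner contribution and avoid double counting. Extra bookkeeping is also needed in the edge case where an internal vertex of $T_y$ lies in $S$, which obstructs a straightforward application of the rerouting and requires preserving that vertex explicitly when building $T''$.
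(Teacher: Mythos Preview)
Your charging inequality $|B\setminus X|\le(2p-3)+2|Y|$ is false, not merely unfinished. Take $p=2$, let $T$ be a path $s_1v_1v_2\cdots v_ms_2$, and let $Y=\{y\}$ with $y$ adjacent in $G$ to every $v_i$ and to nothing else. Then $|B\setminus X|=m$, while your bound gives $(2\cdot 2-3)+2\cdot 1=3$; for $m\ge 4$ this fails. (The lemma itself survives: here $\ell\le 5$ via $s_1v_1yv_ms_2$, so $r\ge m-2$ and $4p+2r-5\ge 2m-1>m$.) No per-$y$ constant can work, because a single $y$ may see arbitrarily many degree-two tree vertices; the extra room has to come from $r$, not from~$|Y|$.

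Your rerouting step has a second, independent gap. When you delete $V(T_y)\setminus N_y$ and attach the star at $y$, any vertex $w\in V(T_y)\setminus N_y$ with a $T$-neighbour outside $T_y$ (an attachment point) disconnects that branch from the rest; this happens whenever $d_T(w)>d_{T_y}(w)$, which is not at all related to whether $w\in S$. So $T''$ need not be connected, and the inequality $\ell\le|V(T'')|$ is unavailable. You would have to retain all attachment points as well as all $S$-vertices, which wrecks the clean count $|V(T)|-(|V(T_y)|-|N_y|)+1$.

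The paper avoids both difficulties by a single global deletion rather than a per-$y$ replacement. It roots $T$ at a leaf, picks for each $u\in Y$ the \emph{highest} neighbour $x(u)\in V(T)$, and lets $W'$ be the parents (in $T$) of the remaining neighbours, discarding those parents that lie in $S\cup D\cup U$. The crucial claim is that $G[V(T)\cup Y]-W'$ is still connected (each removed parent is bypassed through $u$ and $x(u)$), so $|W'|\le |N_G[V(T)]|-\ell\le r$. Every vertex of $N_G(Y)\cap V(T)$ is then either in $D\cup S$, or a child of $D\cup S$, or in $U$, or the unique child of some vertex of $W'$, giving $(2p-2)+(2p-3)+|U|+|W'|\le 4p+2r-5$. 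The point is that the ``excess'' term is $|U|+|W'|\le |Y|+r$, not $2|Y|$; your attempt to replace one $r$ by a second $|Y|$ is exactly what the counterexample above refutes.
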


\begin{proof} Denote by $L$ the set of leaves of $T$ and by $D$ the set of vertices of degree at least 3 in $T$. Clearly, $L\subseteq S$. 
We select a leaf $z$ of $T$ as the \emph{root} of $T$. The selection of a root defines a parent-child relation on $T$. For each $u\in Y$, 
denote by $x(u)$  the vertex in $N_G(u)\cap V(T)$ at minimum distance to $z$ in $T$. Let $U=\{x(u)|u\in Y\}$. For a vertex $u\in Y$ and $v\in N_G(u)\cap V(T)\setminus\{x(u)\}$, let $y(u,v)$
be the parent of $v$ in $T$. Let $W=\{y(u,v)|u\in Y, v\in N_G(u)\cap V(T), v\neq x(u)\}$ and $W'=W\setminus (S\cup D\cup U)$.  

Let $F=G[V(T)\cup Y]$.
\begin{claim}
Set  $F'=F-W'$ is connected. 
\end{claim}
\begin{proof}[Proof of the claim] Since all leaves of $T$ including $z$ are in $S$, we have that 
  $z\in V(F')$.
To prove the claim, we show that for each vertex $v\in V(F')$, there is a $(v,z)$-path in $F'$.  Every vertex $u\in Y$ has a neighbor $x(u)$ in $F'$. Hence, it is sufficient to prove the existence of  $(v,z)$-paths for $v\in V(T)\setminus W'$. The proof is by  induction on the distance between $z$ and $v$ in $T$. If $v=z$, then we have a trivial $(z,v)$-path. Assume that $v\neq z$. Let $w$ be the parent of $v$ in $T$. If $w\in V(F')$, then by the inductive hypothesis, there is a $(z,w)$-path in $F'$ and it implies the existence of a $(z,v)$-path. Suppose that $w\notin V(F')$, i.e., $w\in W'$. Since $d_T(w)=2$, there is $u\in Y$ such that $w=y(u,v)$. The distance in $T$ between $z$ and $x(u)$ is less than the distance between $z$ and $v$. Therefore, by the inductive hypothesis, there is a $(z,x(u))$-path in $F'$. It remains to observe that because $x(u)u,uv\in E(F')$, $F'$ has a $(z,v)$-path as well.
This concludes the proof to the claim. \end{proof}

Denote by $C$ the set of the children of the vertices of $D\cup S$ in $T$. Observe that $|N_G(Y)\cap V(T)|\leq |D\cup S|+|C|+|U|+|W'|$.
Recall that $|V(F)|\leq \ell+r$. Because $F'$ is connected and $S\subseteq V(F')$, $|V(F')|\geq \ell$. Hence, $|W'|\leq r$.  
Let $q_1=|L|$, $q_2=|V(T)\setminus (L\cup D)|$ and $q_3=|D|$.  We have that $q_1+2q_2+3q_3\leq 2|E(T)|=2(q_1+q_2+q_3-1)$. Then $q_3\leq q_1-2$ and $|D\cup S|\leq 2|S|-2=2p-2$, because $L\subseteq S$. Let $T'$ be the tree obtained from $T$ by consequent dissolving all 
the vertices of degree 2 that are not in $S$. Then $|C|\leq |E(T')|\leq 2|S|-3=2p-3$. Since $|V(T)|\geq \ell$, $|U|\leq |Y|\leq r$. We obtain that 
$|N_G(Y)\cap V(T)|\leq |D\cup S|+|C|+|U|+|W'|\leq 2p-2+2p-3+r+r=4p+2r-5$.
\end{proof}

Now we are ready to prove the main result of the section.

\begin{theorem}\label{thm:above-tree}
 \textsc{Secluded Steiner Tree} can be solved in time $2^{O(p+r)}\cdot nm\cdot \log\cost$  by a true-biased Monte-Carlo algorithm and in time $2^{O(p+r)}\cdot nm \log n\cdot \log\cost$ by a deterministic algorithm for graphs with $n$ vertices and $m$ edges,  where $r=k-\ell$ and $\ell$ is the size of a Steiner tree for $S$
and $\cost$ is the maximum value of $\costfunction$ on  an input graph $G$.
\end{theorem}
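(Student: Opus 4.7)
My approach combines the two preceding structural lemmas with color coding and a Steiner-tree-style dynamic program. For a minimal optimum tree $T$ with exposure $\ell+r$, I would set $B=D\cup S\cup U$, where $D$ is the set of vertices of $V(T)$ having degree at least three in $G[V(T)]$, and $U=N_G(Y)\cap V(T)$ with $Y=N_G(V(T))\setminus V(T)$. Applying Lemma~\ref{lem:big-degree} to $G[V(T)]$ (viewed as a minimal induced subgraph containing $S$) gives $|D\cup S|\leq 4p-6$, and Lemma~\ref{lem:adj} gives $|U|\leq 4p+2r-5$, so $|B|\leq c$ for some $c=O(p+r)$. Every $v\in V(T)\setminus B$ has degree $2$ in $G[V(T)]$, is not in $S$, and has no external neighbour, which forces $d_G(v)=2$ with both $G$-neighbours in $V(T)$. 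Equivalently, every non-core vertex of $T$ is an internal vertex of a maximal degree-$2$ path of $G$ (a ``2-path'') whose endpoints belong to $B$. I would precompute all 2-paths of $G$ together with the $\costfunction$-sum of each of their interiors in linear time.

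I would then apply color coding: colour the vertices of $G$ independently and uniformly at random with one of $c$ colours. With probability at least $c!/c^c=2^{-O(p+r)}$ the set $B$ is colourful. Conditioned on this event, I would run a dynamic program in the spirit of the Bj\"orklund et al.\ $O^*(2^p)$ Steiner-tree algorithm: states are pairs $(U,v)\in 2^{[c]}\times V(G)$, and $M[U,v]$ stores the minimum exposure of a colourful partial solution covering exactly the colours in $U$, connected through 2-paths of $G$, and pivoted at~$v$. Two partial solutions sharing a pivot are merged via subset convolution, and a partial solution is extended along a 2-path of $G$ ending at $v$ using the precomputed interior cost. The DP fills in $2^c(n+m)$ time per colouring, and a binary search over the cost threshold $C$ adds the $\log\cost$ factor.

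For the randomized bound I would repeat the colouring $O(e^c)=2^{O(p+r)}$ times so that the optimum is recovered with constant probability, yielding the true-biased $2^{O(p+r)}\cdot nm\cdot\log\cost$ guarantee. For the deterministic version I would replace the random colourings by an $(n,c)$-perfect hash family of size $2^{O(c)}\log n$ (e.g.\ via Naor--Schulman--Srinivasan), incurring the extra $\log n$ factor stated in the theorem.

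The main obstacle is accounting for the exposure contribution $\costfunction(Y)$ cleanly inside the DP, since a single external neighbour $y\in Y$ may be adjacent to several core vertices and $\costfunction(y)$ must be counted exactly once across merges. A canonical-attribution scheme---charging each $y$ to its neighbour in $V(T)$ of lowest colour under the current colouring---avoids double-counting, but it must be woven into the merge transitions so that it remains consistent as colours are added. Once this bookkeeping is in place, the claimed bounds follow from the analysis above.
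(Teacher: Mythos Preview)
Your route differs from the paper's. Both start from Lemmas~\ref{lem:big-degree} and~\ref{lem:adj}, but the paper uses two-colour \emph{random separation} rather than $c$-colour coding plus a Steiner-type DP. It colours $V(G)$ red/blue and asks that a set $W\supsetneq B$ of size $O(p+r)$ (containing, besides your $D\cup S\cup U$, also $N_{G[V(T)]}(D\cup S)$, the external set $Y$ itself, and two $T$-neighbours of each vertex in $U\setminus S$) be correctly split: $W\cap V(T)$ red, $Y$ blue. Under a good colouring every monochromatic component lies entirely inside or entirely outside $V(T)$; a short loop then grows the red component of $s_1$ by absorbing blue components certified to lie in $V(T)$, until it contains all of $S$. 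The output $T'$ satisfies $V(T')\subseteq V(T)$, so one simply \emph{checks} $|N_G[V(T')]|\le k$ and $\costfunction(N_G[V(T')])\le C$ at the end---no exposure or cost is tracked during the search.

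That is exactly where your plan has a genuine gap. The objective $|N_G[V(\cdot)]|$ (and likewise $\costfunction(N_G[V(\cdot)])$) is not additive across the merge step of a min--sum DP, and the lowest-colour attribution you sketch does not make it so within subset convolution. To charge $y$ to its lowest-colour neighbour \emph{in $V(T)$} you must know which core vertices end up in $V(T)$, but the state $(U,v)$ records only the set $U$ of used colours, not which vertex carries each colour nor which 2-paths were chosen. Concretely, when merging over $U=U_1\sqcup U_2$, a vertex $y$ adjacent to core vertices in both pieces is paid for in both $M[U_1,v]$ and $M[U_2,v]$ (each side believes it holds $y$'s lowest-colour neighbour relative to what it sees), and a neighbour $y$ counted as external in $M[U_1,v]$ may in fact lie inside the other piece as a core vertex or on a used 2-path. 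Your proposal flags the obstacle but does not show how to resolve it in $2^{O(p+r)}$ time. A smaller issue: a terminal $s$ with $d_G(s)=2$ can sit strictly inside a maximal degree-2 path of $G$, so the precomputed 2-paths must at least be broken at terminals. The paper's random-separation approach buys precisely the simplicity you are missing: by guaranteeing $V(T')\subseteq V(T)$ it defers the entire exposure computation to a single post-hoc check and never has to decompose a non-additive objective.
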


\begin{proof}
We construct an \classFPT-algorithm for \textsc{Secluded Steiner Tree} parameterized by $p+r$.
The algorithm is based  on the random separation techniques introduced by Cai, Chan, and Chan~\cite{CaiCC06} (see also~\cite{AlonYZ95}). We first describe a randomized algorithm and then explain how it can be derandomized.

Let $\mathcal{I}=(G,\costfunction,S,k,C)$ be an instance of \textsc{Secluded Steiner Tree}, $\ell$ be the size of a Steiner tree for $S=\{s_1,\ldots,s_p\}$ and $r=k-\ell$. Without loss of generality we assume that $p\geq 2$ and $r\geq 1$ as for $p=1$ or $r=0$, the problem is trivial. We also can assume that $G$ is connected.

\paragraph{Description of the algorithm}
In each iteration of the algorithm
we color the vertices of $G$ independently and uniformly at random by two colors. In other words, we partition $V(G)$ into two sets $R$ and $B$. We say that the vertices of $R$ are \emph{red}, and the vertices of $B$ are \emph{blue}. Our algorithm can recolor some blue vertices red, i.e., the sets $R$ and $B$ can be modified.
Our aim is  to find a connected subgraph $T$ of $G$ with $S\subseteq V(T)$ such that $|N_G[V(T)]|\leq k$, $\costfunction(N_G[V(T)])\leq C$ and $V(T)\subseteq R$.

\medskip
\noindent
{\bf Step 1.} If $G[R]$ has a component $H$ such that $S\subseteq V(H)$, then find a spanning tree $T$ of $H$. If $|N_G[V(T)]|\leq k$ and $\costfunction(N_G[V(T)])\leq C$, then return $T$ and stop; otherwise, return  that 
$\mathcal{I}$
is no-instance   and stop.

\medskip
\noindent
{\bf Step 2.} If there is $s_i\in S$ such that $s_i\notin R$ or $N_G(s_i)\cap R=\emptyset$, then return  that 
$\mathcal{I}$
is no-instance  and stop.

\medskip
\noindent
{\bf Step 3.} Find a component $H$ of $G[R]$ with $s_1\in V(H)$. 
If there is a pendant vertex $u\notin S$ of $H$ that is adjacent in $G$ to the unique vertex $v\in B$, then find a component of $G[B]$ that contains $v$, recolor its vertices red and then return to Step~1. 
Otherwise, return that $(G,S,k)$ is no-instance 
  and stop.
\medskip

We repeat at most $2^{O(r+p)}$ iterations. If on some iteration we obtain a yes-answer, then we return it and the corresponding solution. Otherwise, if on every iteration we get a no-answer, we return a no-answer.  

\paragraph{Correctness of the algorithm}
It is straightforward to see that if this algorithm returns a tree $T$ in $G$ with  $|N_G[V(T)]|\leq k$ and $\costfunction(N_G[V(T)])\leq C$, then we have a solution for the considered instance of \textsc{Secluded Steiner Tree}. We show that if 
$\mathcal{I}$
is a yes-instance, then there is a positive constant $\alpha$ that does not depend on $n$ and $r$ such that the algorithm finds a tree $T$ in $G$ with  $|N_G[V(T)]|\leq k$ and $\costfunction(N_G[V(T)])\leq C$ with probability at least $\alpha$ after $2^{O(p+r)}$ executions of  this algorithm for random colorings. 

Suppose that 
$\mathcal{I}$
is a yes-instance. Then there is a tree $T$ in $G$ such that $S\subseteq V(T)$, $|N_G[V(T)]|\leq k$ and $\costfunction(N_G[V(T)])\leq C$. Without loss of generality we assume that $T$ is inclusion minimal. Let $F=G[V(T)]$, $X=\{v\in V(F)|d_F(v)\geq 3\}\cup S$, $X'=N_F(X)$, $Y=N_G(V(T))$ and $Y'=N_G(Y)\cap V(T)$. For each $v\in Y'\setminus S$, 
we arbitrarily select two distinct neighbors 
$z_1(v)$ and $z_2(v)$ in $T$. Because the leaves of $T$ are 
in $S$, we have that $v$ is not a leaf and thus has at least two neighbors.  Let $Z=\{z_i(v)|v\in Y'\setminus S,i=1,2\}$.  Let $W=X\cup X'\cup Y\cup Y'\cup Z$. 

By Lemma~\ref{lem:big-degree}, $|X|\leq  4p-6$ and $|X'|\leq  4p-6$. By Lemma~\ref{lem:adj}, $|Y'|\leq 4p+2r-5$ and, therefore, $|Z|\leq 8p+4r-10$. Because $|V(T)|\geq\ell$ and 
$|N_G[V(T)]|\leq \ell+r$, we have that $|Y|\leq r$. Hence  $|W|\leq |X|+|X'|+|Y|+|Y'|+|Z|\leq 4p-6+4p-6+r+4p+2r-5+8p+4r-10=20p+7r-27$. 
Let  $N=20p+7r-27$. 
Then with probability at least
$2^{-N}$, the vertices of $Y$ are colored blue and the vertices of $X\cup X'\cup Y'\cup Z$ are colored red, i.e., $W\cap V(T)\subseteq R$ and $W\setminus V(T)\subseteq B$.
The probability that for a random coloring, the vertices of $W$ are colored incorrectly, i.e., $W\cap V(T)\cap B\neq\emptyset$ or $(W\setminus V(T))\cap R\neq\emptyset$, is at most 
$1-2^{-N}$. 
Hence, if we consider $2^N$ random colorings, 
then the probability that the vertices of $W$ are colored incorrectly for all the colorings is at most $(1-2^{-N})^{2^N}$, and
with probability at least $1-(1-2^{-N})^{2^N}$ for at least one coloring we will have  $W\cap V(T)\subseteq R$ and $W\setminus V(T)\subseteq B$.   
Since $(1-2^{-N})^{2^N}\leq 1/e$, we have that $1-(1-2^{-N})^{2^N}\leq 1-1/e$.
Thus if $\mathcal{I}$
is a yes-instance, after  $2^N$  random colorings of $G$, we have  that at least one of the colorings is successful with a 
constant success probability
$\alpha=1-1/e$.

Assume that for a random red-blue coloring of $G$, $W\cap V(T)\subseteq R$ and $W\setminus V(T)\subseteq B$. We show that in this case the algorithm finds a tree $T'$ with $S\subseteq V(T')\subseteq V(T)$. Clearly, $|N_G[V(T')]|\leq |N_G[V(T)]|\leq k$ and $\costfunction(N_G[V(T')])\leq \costfunction(N_G[V(T)])\leq C$  in this case.

We claim that for every connected  component  $H$ of $G[R]$, either $V(H)\subseteq V(T)$ or $V(H)\cap V(T)=\emptyset$. To obtain a contradiction, assume that there are $u,v\in V(H)$ such that $u\in V(T)$ and $v\notin V(T)$.
Indeed, $H$ is connected, and thus contains an $(u,v)$ path $P$. Since $P$ goes from $V(T)$ to $v\not\in V(T)$, path $P$ should contain a vertex $w\in N_G(T) =Y$.  But $w$ is colored blue, which is a contradiction to the assumption that $P$ is in the red component $H$. 
  By the same arguments, for any component $H$ of $G[B]$, either $V(H)\subseteq V(T)$ or $V(H)\cap V(T)=\emptyset$.

We consider Steps~1--3 of the algorithm and show their correctness.

Suppose that $G[R]$ has a component $H$ such that $S\subseteq V(H)$. Because $S\subseteq W$ and $S\subseteq V(T)$, $V(H)\subseteq V(T)$. 
Then for  every spanning tree $T'$ of $H$, $S\subseteq V(T')$ and $N_G[V(T')]\subseteq N_G[V(T)]$. 
Therefore, $|N_G[V(T')]|\leq |N_G[V(T)]|\leq k$ and $\costfunction(N_G[V(T')])\leq \costfunction(N_G[V(T)])\leq C$. Hence, if a component of $G[R]$ contains $S$, then we find a solution. 
This concludes the proof of the correctness of the first step. 

Let us assume that the algorithm does not stop at Step~1. For the right coloring,  
because $S\subseteq X$ and $N_F(S)\subseteq X'$, for  every   $s_i\in S$, we have that $s_i\in R$. 
 Moreover, because $p\geq 2$,
  at least one neighbor of $s_i$ in $G$ is in $R$.  Thus the only reason why the algorithm stops at Step~2 is due to the wrong coloring. 
  Consider the case when the algorithm does not stop after Step~2.

Suppose that $H$ is a component of $G[R]$ with $s_1\in V(H)$. Because the algorithm did not stop in Step~2, such a component $H$  exists and has at least 2 vertices.  
Recall that $V(H)\subseteq V(T)$. 
Because we proceed in Step~1, we conclude that $S\setminus V(H)\neq \emptyset$. Then there is a vertex $u\in V(H)$ which has a neighbor $v$ in $T$ such that $v\in B$. If $u\in S$, then $v\in X'$, but this contradicts the assumption  $X'\subseteq R$. Hence, $u\notin S$.
Suppose that 
$d_H(u)\geq 2$.
In this case $d_F(u)\geq 3$ and $v\in X'$; a contradiction. 
Therefore, $u$ is a pendant vertex of $H$. 

Let $u\notin S$ be an arbitrary pendant vertex of $H$. 
If $u$ has no neighbors in $B$, then $u$ is a leaf of $T$ that does not belong to $S$ but this contradicts the inclusion minimality of $T$. 
Assume that $u$ is adjacent to at least two distinct vertices of $B$. 
Because $T$ is an inclusion minimal tree spanning  $S$, vertex $u$ has at least two neighbors in $T$ and $u$ has a neighbor $v\in B$ in $T$. Let $w\in (N_G(u)\cap B)\setminus\{v\}$. 
If $w\in V(T)$, then $d_F(u)\geq 3$ and, therefore, $u\in X$ and $v,w\in X'$; a contradiction with $X'\subseteq R$. \
Hence, $w\notin V(T)$.
Moreover, $v$ is the unique neighbor of $u$ in $T$ that belongs to $B$.  
Then $w\in Y$ and $v\in\{z_1(u),z_2(u)\}$;  a contradiction with $Z\subseteq R$.  
We obtain that  $u$ is adjacent in $G$ to the unique vertex $v\in B$. Let $H'$ be the component of $G[B]$ that contains $v$. Since $T$ is an inclusion minimal tree that spans $S$, $u$ has at least two neighbors in $T$. It implies that $v\in V(T)$,  therefore $V(H')\subseteq V(T)$. We recolor the vertices of $H'$ red in Step~3. For the new coloring the vertices of $Y$ are blue and the vertices of $W\setminus Y$ are red. Therefore, we keep the crucial property of the considered coloring but we increase the size of the component of $G[R]$  containing $s_1$.  

To conclude  the correctness proof, it remains to observe that in Step~3 we increase the number of vertices in the component of $G[R]$ that contains $s_1$. Hence, after at most $n$ iterations, we obtain a component in $G[R]$ that includes $S$ and return a solution in Step~1.

It is straightforward to verify that each of Steps~1--3 can be done in time $O(m\log\cost)$. Because the number of iterations is at most $n$, we obtain that the total running time is 
$2^{O(p+r)}\cdot nm\log\cost$. 

This algorithm can be derandomized by standard techniques (see~ \cite{AlonYZ95,CaiCC06}). The random colorings can be replaced by the colorings induced by \emph{universal sets}.
Let $n$ and $q$ be positive integers, $q \leq n$. An  \emph{$(n,q)$-universal set} is a collection of binary vectors of length $n$ such that for each index subset of size $q$, each of the $2^q$ possible combinations of values appears in some vector of the set. It is known that an $(n,q)$-universal set can be constructed in \classFPT-time with the parameter $q$. The best construction is due to Naor, Schulman and Srinivasan~\cite{naor1995splitters}. They obtained an $(n,q)$-universal set of size $2^q\cdot q^{O(\log q)} \log n$, and proved that the elements of the sets  can be listed in time that is linear in the size of the set. In our case $n$ is the number of vertices of $G$ and $q=20p+7r-27$. 
\end{proof}

We complement Theorem~\ref{thm:above-tree} by showing that it is unlikely that \textsc{Secluded Steiner Tree} is \classFPT\ if parameterized by $r$ only. To show it, we use the standard reduction from the \textsc{Set Cover} problem (see, e.g., \cite{Karp72}). Notice that we prove that \textsc{Secluded Steiner Tree} is co-\classW{1}-hard, i.e., we show that it is \classW{1}-hard to decide whether we have a no-answer.

\begin{theorem}\label{thm:co-W-hard}
\textsc{Secluded Steiner Tree} without costs is co-\classW{1}-hard when parameterized by $r$,
 where $r=k-\ell$ and $\ell$ is the size of a Steiner tree for $S$.
\end{theorem}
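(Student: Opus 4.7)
The plan is to give a parameterized reduction from \textsc{Set Cover}, which is $\classW{2}$-hard and hence $\classW{1}$-hard when parameterized by the size $t$ of the desired cover, to the complement of \textsc{Secluded Steiner Tree} parameterized by $r=k-\ell$. By definition, co-$\classW{1}$-hardness of \textsc{Secluded Steiner Tree} is exactly $\classW{1}$-hardness of its complement, so such a reduction suffices.

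Given an input $(U,\mathcal{F},t)$ with $U=\{u_1,\dots,u_n\}$ and $\mathcal{F}=\{F_1,\dots,F_m\}$, the backbone of the construction is the classical Karp-type encoding of \textsc{Set Cover} as \textsc{Steiner Tree}: introduce a root terminal $s$, an element terminal $u_j$ for every $u_j\in U$, a non-terminal vertex $f_i$ for every $F_i\in\mathcal{F}$, and edges $sf_i$ together with $f_iu_j$ whenever $u_j\in F_i$. On this skeleton any connected subgraph containing $S=\{s\}\cup\{u_1,\dots,u_n\}$ picks a subfamily of $\mathcal{F}$ that covers $U$, and the minimum tree has size $1+n+t^{\ast}$, where $t^{\ast}$ is an optimum cover size. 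I will attach small uniform auxiliary structures (e.g.\ pendant vertices or private short paths) to every set-vertex $f_i$ so that the closed neighborhood $|N_G[V(T)]|$ of a candidate tree $T\supseteq S$ varies in a controlled, predictable way with the number of set-vertices it contains. The threshold $k$ is then chosen so that $r=k-\ell$ depends only on $t$, not on $n$, $m$, or the unknown optimum cover size $t^{\ast}$. The verification will then show that $(U,\mathcal{F},t)$ admits a cover of size at most $t$ if and only if $(G,S,k)$ is a no-instance of \textsc{Secluded Steiner Tree}, i.e., every connected subgraph of $G$ containing $S$ has exposure strictly greater than $k$.

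The main obstacle is setting up the inverse correspondence between cover size and exposure with the correct calibration. The gadgets must be tuned so that the existence of a cover of size at most $t$ forces every connected subgraph containing $S$ to reveal at least $r+1$ extra vertices in its neighborhood, while the absence of any such cover creates enough ``slack'' in the graph to route a secluded subgraph of exposure at most $k$. An additional delicate point is that the minimum Steiner tree size $\ell$ must be a fixed function of the reduction so that $r=k-\ell$ is genuinely a function of $t$ and not of $t^{\ast}$; this typically forces every set-gadget to contribute a uniform amount to both Steiner-tree size and neighborhood size so that the instance-dependent terms cancel in the bookkeeping. Structural arguments about inclusion-minimal Steiner trees, analogous to Lemmas~\ref{lem:big-degree} and~\ref{lem:adj}, will most likely be needed to pin down the shape of minimum-exposure trees and to prove both directions of the equivalence.
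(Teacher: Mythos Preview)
Your proposal has a genuine gap that you yourself flag as the ``delicate point'' but never resolve. If you reduce from \textsc{Set Cover} parameterized by the cover size $t$, then in the Karp-style bipartite construction the minimum Steiner tree size is $\ell = 1 + n + t^{\ast}$, where $t^{\ast}$ is the \emph{optimal} cover size. Any connected subgraph spanning $S$ selects a cover, and the minimum one selects an optimal cover; attaching uniform pendants or short paths to each set-vertex $f_i$ does not change this. Such gadgets add a fixed amount to the exposure for every $f_i$ actually contained in the tree, but an optimal tree still uses exactly $t^{\ast}$ of them, so both $\ell$ and the minimum exposure remain functions of $t^{\ast}$. There is no way to ``cancel'' $t^{\ast}$ by uniform per-gadget bookkeeping, because the number of gadgets picked by an optimal solution \emph{is} $t^{\ast}$. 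Consequently you cannot choose $k$ (equivalently $r=k-\ell$) as a function of $t$ alone, and the reduction is not parameter-preserving. Your expectation that Lemmas~\ref{lem:big-degree} and~\ref{lem:adj} will be needed is also off the mark; they play no role in this hardness proof.

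The paper's reduction is both simpler and rests on a different source parameter. It reduces from \textsc{Set Cover} parameterized by $m-k$ (the number of sets \emph{not} in the cover), which is \classW{1}-hard by a result of Gutin, Jones and Yeo~\cite{GutinJY11}. The construction is the plain bipartite graph---a root $y$ joined to set-vertices $x_1,\dots,x_m$, each $x_i$ joined to its elements, terminals $S=\{y,u_1,\dots,u_n\}$---with \emph{no} auxiliary gadgets. The key observation is that, since the terminal $y$ is adjacent to every $x_i$, the closed neighbourhood of \emph{any} connected subgraph containing $S$ is all of $V(G)$; the exposure is the constant $n+m+1$. Setting $r=m-k-1$ (a function of the source parameter only), the instance $(G,S,\ell+r)$ is a no-instance iff $\ell+r<n+m+1$ iff $\ell\le n+k+1$ iff there is a cover of size at most $k$. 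The entire argument is thus a one-line counting identity once the right source parameterization is chosen.
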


\begin{proof}[of Theorem~\ref{thm:co-W-hard}]
Recall that the \textsc{Set Cover} problem for a set $U$, subsets $X_1,\ldots,X_m\subseteq X$ and a positive integer $k$, asks whether there are $k'\leq k$ sets $X_{i_1},\ldots,X_{i_{k'}}$ for $i_1,\ldots,i_{k'}\in\{1,\ldots,m\}$
that \emph{cover} $U$, i.e., $U\subseteq \cup_{j=1}^kX_{i_j}$. As it was observed in~\cite{GutinJY11}\footnote{Gutin et al. prove in \cite{GutinJY11} the statement for the dual \textsc{Hitting Set} problem.}, \textsc{Set Cover} is \classW{1}-hard when parameterized by $p=m-k$. 
To prove the theorem, we reduce this parameterized variant of \textsc{Set Cover}. 

Let $(U,X_1,\ldots,X_m,k)$ be an instance of \textsc{Set Cover}. Let $U=\{u_1,\ldots,u_n\}$. We construct the bipartite graph $G$ as follows.
\begin{itemize}
\item[i)] Construct $m$ vertices $x_1,\ldots,x_m$ and $n$ vertices $u_1,\ldots,u_n$.
\item[ii)] For $i\in\{1,\ldots,m\}$ and $j\in \{1,\ldots,n\}$, construct an edge $x_iu_j$ if $u_j\in X_i$.
\item[iii)] Construct a vertex $y$ and join it with $x_1,\ldots,x_m$ by edges. 
\end{itemize}
Let $S=\{y,u_1,\ldots,u_n\}$ and $r=m-k-1$.

Suppose that $(U,X_1,\ldots,X_m,k)$ is a yes-instance of \textsc{Set Cover} and  assume that $X_{i_1},\ldots,X_{i_{k'}}$ cover $U$. Then $F=G[S\cup\{x_{i_1},\ldots,x_{i_{k'}}\}]$ is a connected subgraph of $G$ and $S\subseteq V(F)$. Clearly, $|V(F)|\leq n+k+1$. Let $T$ be a Steiner tree for the set of terminals $S$. We have that $\ell=|V(T)|\leq |V(F)|\leq n+k+1$. 
Notice that for any connected subgraph $T'$ of $G$ such that $S\subseteq V(T')$, $N_G[V(T')]=V(G)$. 
We have that for any connected subgraph $T'$ of $G$ with $S\subseteq V(G)$, $|N_G[V(T')]|=n+m+1>(n+k+1)+(m-k-1)\geq\ell+r$. Therefore, $(G,S,\ell+r)$ is a no-instance of \textsc{Secluded Steiner Tree} without costs.

Assume now that $(G,S,\ell+r)$ is a no-instance of \textsc{Secluded Steiner Tree} without costs. 
Let $T$ be a Steiner tree for the set of terminals $S$. Because for any connected subgraph $T'$ of $G$ such that $S\subseteq V(T')$, $N_G[V(T')]=V(G)$, and because $(G,S,\ell+r)$ is a no-instance, 
$\ell=|V(T)|<|V(G)|-r=(n+m+1)-(m-k-1)=n+k+2$. Let $\{x_{i_1},\ldots,x_{i_{k'}}\}=\{x_1,\ldots,x_k\}\cap V(T)$. Since $|V(T)|\leq n+k+1$, we obtain that $k'\leq k$. It remains to note that $X_{i_1},\ldots,X_{i_{k'}}$ cover $U$ and, therefore, $(U,X_1,\ldots,X_m,k)$ is a yes-instance of \textsc{Set Cover}.
\end{proof}

\section{Structural parameterizations of  Secluded \\
Steiner Tree}\label{sec:struct}

In this section we consider different algorithmic and complexity results concerning different structural parameterizations of secluded connectivity problems. We consider   parameterizations by  the treewidth, size of the solution, maximum degree and the size of a vertex cover of the input graph. (See Appendix for definitions of these parameters.) We show that
  it is unlikely that \textsc{Secluded Path} without costs parameterized by $k$, the treewidth and the maximum degree of the input graph  
has a polynomial kernel. We obtain the same result for the cases when the problem is parameterized by $k-\ell$, the treewidth and the maximum degree of the input graph, where $\ell$ is the length of the shortest path between terminals.

\begin{theorem}\label{thm:no-kern-tw}
\textsc{Secluded Path} without costs on graphs of treewidth at most $t$ and maximum degree at most $\Delta$ admits no polynomial kernel unless  $\classNP\subseteq\classCoNP/\text{\rm poly}$ when parameterized by $k+t+\Delta$ or $(k-\ell)+t+\Delta$, where $\ell$ is the length of the shortest path between terminals.
\end{theorem}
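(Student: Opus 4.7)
The plan is to prove both parameterization lower bounds by a single cross-composition in the sense of Bodlaender, Jansen, and Kratsch, starting from the \classNP-hard problem \textsc{Secluded Path} without costs (shown to be \classNP-hard by Chechik et al.). As polynomial equivalence relation, I declare two instances $(G,s,t,k)$ and $(G',s',t',k')$ equivalent iff they agree on the number of vertices, the bound $k$, and the terminal distance $\mathrm{dist}_G(s,t)=\mathrm{dist}_{G'}(s',t')$; since all three quantities are bounded polynomially by the instance size, this yields only polynomially many classes per input length. It then suffices to describe an OR-style composition that merges $q$ equivalent source instances into one target instance whose combined parameter is polynomial in $n+\log q$, where $n$ bounds the source instance size.

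Given equivalent instances $(G_1,s_1,t_1,k^\star),\dots,(G_q,s_q,t_q,k^\star)$ with common terminal distance $d$, I would form $G^\star$ by introducing two fresh vertices $s^\star,t^\star$, attaching to each of them a complete binary tree of depth $\lceil\log q\rceil$, and then identifying the $i$-th leaf of the first tree with $s_i$ and the $i$-th leaf of the second tree with $t_i$. The new terminals are $s^\star,t^\star$ and I set $k:=k^\star+4\lceil\log q\rceil$. Any $s^\star$-$t^\star$-path $P^\star$ in $G^\star$ is forced to descend one binary tree to some leaf $s_i$, traverse a subpath $P_i$ inside $G_i$ to $t_i$, and ascend the second binary tree. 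Since the unique tree-side neighbour of each of $s_i$ and $t_i$ already lies on $P^\star$, the closed neighbourhood of $P^\star$ decomposes cleanly: each binary tree contributes $\lceil\log q\rceil+1$ path vertices plus $\lceil\log q\rceil$ off-path siblings, and a short calculation gives $|N_{G^\star}[V(P^\star)]| = |N_{G_i}[V(P_i)]| + 4\lceil\log q\rceil$. This identity delivers precisely the OR-equivalence demanded by cross-composition.

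It remains to bound the three structural parameters of $G^\star$. Gluing tree decompositions of the $G_i$'s to the (treewidth-one) binary trees at the identified leaves yields $\tw(G^\star)\le\max_i\tw(G_i)+1$, while $\Delta(G^\star)\le\max\{3,\max_i\Delta(G_i)+1\}$; both quantities are trivially polynomial in $n$. Combined with $k=k^\star+4\lceil\log q\rceil$, the parameter $k+t+\Delta$ of the composed instance is polynomial in $n+\log q$, as required. For the second parameterization, the equivalence on terminal distances pins the shortest $s^\star$-$t^\star$-path in $G^\star$ to exactly $d+1+2\lceil\log q\rceil$ vertices (no shortcut is possible, since every such path must cross one of the $G_i$'s), hence $k-\ell = k^\star-d-1+2\lceil\log q\rceil$ is likewise polynomial in $n+\log q$. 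The same cross-composition therefore witnesses both parameterizations simultaneously, and invoking the Bodlaender--Jansen--Kratsch meta-theorem yields the claimed absence of polynomial kernels unless $\classNP\subseteq\classCoNP/\text{\rm poly}$. I expect the most delicate step to be the exposure bookkeeping at the interface vertices $s_i,t_i$: one must verify carefully that no boundary vertex is double-counted in the identity above, and that the ``tree--$G_i$--tree'' decomposition of an arbitrary $s^\star$-$t^\star$-path is genuinely forced, so that both directions of the OR-equivalence go through.
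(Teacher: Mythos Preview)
Your proposal is correct and follows essentially the same construction as the paper: both OR-cross-compose \textsc{Secluded Path} into itself by hanging the instances $G_i$ between two complete binary trees rooted at the new terminals, and both set $k'=k^\star+4\lceil\log q\rceil$. The paper uses two separate equivalence relations (adding the distance condition only for the $(k-\ell)$-parameterization), whereas you use a single relation including the distance for both---this is a harmless simplification. Your structural bounds $\tw(G^\star)\le\max_i\tw(G_i)+1$ and $\Delta(G^\star)\le\max\{3,\max_i\Delta(G_i)+1\}$ are sharper than the paper's blunt $\tw(G)\le n-1$, $\Delta(G)\le n$; either suffices.

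One caution on the point you yourself flag: the sentence ``any $s^\star$-$t^\star$-path $P^\star$ is forced to descend one binary tree \dots\ and ascend the second binary tree'' is not literally true for arbitrary simple paths, which may zigzag (e.g.\ $T_1\to G_i\to T_2\to G_j\to T_1\to\cdots$). What \emph{is} true, and what both you and the paper actually need, is that every $s^\star$-$t^\star$-path contains as a subpath the unique $T_1$-path from $s^\star$ to its first leaf $s_i$, followed by an $s_i$-$t_i$ path $P_i$ inside $G_i$; this already gives the inequality $|N_{G^\star}[V(P^\star)]|\ge |N_{G_i}[V(P_i)]|+4\lceil\log q\rceil$, with equality for the canonical path. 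So your identity should be read as an inequality for the completeness direction and as an equality only for the soundness direction---exactly as the paper argues.
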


The proof uses the cross-composition technique introduced by Bodlaender, Jansen and Kratsch~\cite{BodlaenderJK14}.
We need the following additional definitions (see~\cite{BodlaenderJK14}).

Let $\Sigma$ be a finite alphabet. An equivalence relation $\mathcal{R}$ on the set of strings $\Sigma^*$ is called a \emph{polynomial equivalence relation} if the following two conditions hold:
\begin{itemize}
\item[i)] there is an algorithm that given two strings $x,y\in\Sigma^*$ decides whether $x$ and $y$ belong to
the same equivalence class in time polynomial in $|x|+|y|$,
\item[ii)] for any finite set $S\subseteq\Sigma^*$, the equivalence relation $\mathcal{R}$ partitions the elements of $S$ into a
number of classes that is polynomially bounded in the size of the largest element of $S$.
\end{itemize}

Let $L\subseteq\Sigma^*$ be a language, let $\mathcal{R}$ be a polynomial
equivalence relation on $\Sigma^*$, and let $\mathcal{Q}\subseteq\Sigma^*\times\mathbb{N}$   
be a parameterized problem.  An \emph{OR-cross-composition of $L$ into $\mathcal{Q}$} (with respect to $\mathcal{R}$) is an algorithm that, given $t$ instances $x_1,x_2,\ldots,x_t\in\Sigma^*$ 
of $L$ belonging to the same equivalence class of $\mathcal{R}$, takes time polynomial in
$\sum_{i=1}^t|x_i|$ and outputs an instance $(y,k)\in \Sigma^*\times \mathbb{N}$ such that:
\begin{itemize}
\item[i)] the parameter value $k$ is polynomially bounded in $\max\{|x_1|,\ldots,|x_t|\} + \log t$,
\item[ii)] the instance $(y,k)$ is a yes-instance for $\mathcal{Q}$ if and only if at least one instance $x_i$ is a yes-instance for $L$ for $i\in\{1,\ldots,t\}$.
\end{itemize}
It is said that $L$ \emph{OR-cross-composes into} $\mathcal{Q}$ if a cross-composition
algorithm exists for a suitable relation $\mathcal{R}$.

In particular, Bodlaender, Jansen and Kratsch~\cite{BodlaenderJK14} proved the following theorem.

\begin{theorem}[\cite{BodlaenderJK14}]\label{thm:BJK}
If an \classNP-hard language $L$ OR-cross-composes into the parameterized problem $\mathcal{Q}$,
then $\mathcal{Q}$ does not admit a polynomial kernelization unless
$\classNP\subseteq\classCoNP/\text{\rm poly}$.
\end{theorem}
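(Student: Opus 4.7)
The plan is to invoke the cross-composition framework of Theorem~\ref{thm:BJK} by OR-cross-composing \textsc{Secluded Path} without costs (\classNP-hard by~\cite{ChechikJPP13}) into itself. The polynomial equivalence relation $\mathcal{R}$ lumps all ill-formed strings into one class and groups valid instances by the triple $(|V(G)|,k,\ell)$ with $\ell$ the $(s,t)$-distance; both the polynomial-time decidability of $\mathcal{R}$ and the polynomial bound on the number of classes are immediate. Given $T$ instances $(G_i,s_i,t_i,k)$ of one class sharing parameters $(n,k,\ell)$, the composition has two ingredients. First, pad every $G_i$ by attaching $p := 4\lceil\log T\rceil+3$ fresh pendant vertices to $s_i$; since every $(s_i,t_i)$-path visits $s_i$, the padded instance with bound $k_{\mathrm{pad}} := k+p$ has the same YES/NO status and the same $\ell$, while every path-neighborhood is forced to grow by exactly $p$. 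Second, connect the padded instances via two balanced binary selectors $B_1,B_2$ with $T$ leaves each, setting $s := \mathrm{root}(B_1)$ and $t := \mathrm{root}(B_2)$ as the new terminals and adding the edges $s_i\text{-}\mathrm{leaf}_i(B_1)$ and $t_i\text{-}\mathrm{leaf}_i(B_2)$. Set the composed bound to $k' := k_{\mathrm{pad}}+4\lceil\log T\rceil+2$.

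For the YES direction, any secluded $(s_i,t_i)$-path $Q_i$ lifts to $P := s \to \mathrm{leaf}_i(B_1) \to s_i \xrightarrow{Q_i} t_i \to \mathrm{leaf}_i(B_2) \to t$, whose tree segments each contribute at most $2\lceil\log T\rceil+1$ vertices to $N_G[V(P)]$ (the $\lceil\log T\rceil+1$ path vertices plus one sibling at each branching) and whose instance segment contributes at most $k_{\mathrm{pad}}$, totalling at most $k'$. For the NO direction, let $P$ be an $(s,t)$-path with $|N_G[V(P)]|\le k'$, and let $C$ denote the number of padded instances it enters. Since leaves of $B_j$ have degree $2$ with a unique off-tree neighbor, every maximal subpath of $P$ inside a padded $G_i$ runs from $s_i$ to $t_i$, and $C\ge 1$ because $B_1$ and $B_2$ are connected only through the padded instances. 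Assume that all $T$ inputs are NO: then each visited $|N_{\mathrm{padded}\,G_i}[V(Q_i)]| \ge k_{\mathrm{pad}}+1$, since the $p$ pendants lie in $N[s_i]$ and the original exposure is $\ge k+1$. Because padded instances are pairwise vertex-disjoint with disjoint $G$-neighborhoods, these contributions sum disjointly: $C\ge 2$ yields $|N_G[V(P)]|\ge 2(k_{\mathrm{pad}}+1)>k'$ (using $k_{\mathrm{pad}}>4\lceil\log T\rceil$, which follows from the chosen $p$), and $C=1$ yields $|N_G[V(P)]|\ge (k_{\mathrm{pad}}+1)+(4\lceil\log T\rceil+2)>k'$ via the disjoint tree overhead—both contradicting $|N_G[V(P)]|\le k'$.

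The output parameters are bounded as $\tw(G)\le\max_i\tw(G_i)+O(1)$ (glue a width-$1$ tree decomposition of each $B_j$ to decompositions of the padded $G_i$), $\Delta(G)\le\max_i\Delta(G_i)+p+1$ (the binary trees have maximum degree $3$), and $k'=k+O(\log T)$; each is polynomial in $\max_i|x_i|+\log T$, as Theorem~\ref{thm:BJK} requires. Within an $\mathcal{R}$-class the common $\ell$ forces the shortest $(s,t)$-distance in $G$ to be exactly $\ell+2\lceil\log T\rceil$, so $k'-\ell' = (k-\ell)+O(\log T)$ and the very same cross-composition establishes the lower bound for the $(k-\ell)+\tw+\Delta$ parameterization. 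The main obstacle is the selector design: a single high-degree hub at $s$ and $t$ would violate the $\Delta$ bound, while a path-like concatenation of the instances would force the solution to traverse all $T$ of them; the logarithmic-depth binary trees are precisely what simultaneously keeps all three parameters polynomial while injecting only $O(\log T)$ extra vertices into the closed neighborhood, and the pendant padding of size $p$ is tuned to ensure $k_{\mathrm{pad}}>4\lceil\log T\rceil$ so that the $C\ge 2$ case of the NO analysis goes through irrespective of how small the original $k$ might be.
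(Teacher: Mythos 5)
You have proved the wrong statement. Theorem~\ref{thm:BJK} is the Bodlaender--Jansen--Kratsch framework theorem itself: the general assertion that an OR-cross-composition of any \classNP-hard language into a parameterized problem $\mathcal{Q}$ rules out a polynomial kernel for $\mathcal{Q}$ under $\classNP\not\subseteq\classCoNP/\text{poly}$. The paper does not prove this; it cites it from \cite{BodlaenderJK14} and uses it as a black box. A genuine proof would have to go through the distillation machinery: show that composing a hypothetical polynomial kernel for $\mathcal{Q}$ with the OR-cross-composition and the \classNP-hardness reduction yields an OR-distillation (equivalently, a weak composition / OR-compression) of the \classNP-hard language $L$, and then invoke the Fortnow--Santhanam / Drucker-type result that such a distillation forces $\classNP\subseteq\classCoNP/\text{poly}$. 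None of that appears in your write-up.

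What you have instead written is an \emph{application} of Theorem~\ref{thm:BJK}: a concrete OR-cross-composition of \textsc{Secluded Path} into itself, which is the content of Theorem~\ref{thm:no-kern-tw} in the paper, not of Theorem~\ref{thm:BJK}. Your opening sentence even says ``The plan is to invoke the cross-composition framework of Theorem~\ref{thm:BJK}'' --- you are using the theorem you were asked to prove as a tool. As a proposed proof of Theorem~\ref{thm:no-kern-tw} your construction would be worth examining (your pendant padding and the case analysis on $C$ differ from the paper's simpler two-binary-tree construction), but as a proof of Theorem~\ref{thm:BJK} it is not addressing the statement at all and has no overlap with what would need to be shown.
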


\begin{proof}[of Theorem~\ref{thm:no-kern-tw}]
First, we prove the claim for the case when the problem is parameterized by $k+t+\Delta$.

We construct an OR-composition of \textsc{Secluded Path} without costs to  the parameterized version of  \textsc{Secluded Path}. Recall that \textsc{Secluded Path} without costs  was shown to be \classNP-complete by Chechik et al,~\cite{ChechikJPP12,ChechikJPP13}.
We assume that two instances $(G,\{s_1,s_2\},k)$ and $(G',\{s_1',s_2'\},k')$ of \textsc{Secluded Path} without costs  are equivalent if $|V(G)|=|V(G')|$ and $k=k'$. 
Let $(G_i,\{s_1^i,s_2^i\},k)$ for $i\in\{1,\ldots,p\}$ be  equivalent instances of \textsc{Secluded Path}, $|V(G_i)|=n\geq 3$. 
Without loss of generality we assume that $p=2^q$ for a positive integer $q$; otherwise, we add minimum number of copies of $(G_1,\{s_1^1,s_2^1\},k)$ to achieve this property. We construct the graph $G$ as follows.  
\begin{itemize}
\item[i)] Construct disjoint copies of $G_1,\ldots,G_p$.
\item[ii)] Construct a rooted binary tree $T_1$ of height $q$, denote the root by $s_1$ and identify $t=2^q$ leaves of the tree with the vertices of $s_1^1,\ldots,s_1^p$ of $G_1,\ldots,G_p$.
\item[iii)] Construct a rooted binary tree $T_2$ of height $q$, denote the root by $s_2$ and identify $t=2^q$ leaves of the tree with the vertices of $s_2^1,\ldots,s_2^p$ of $G_1,\ldots,G_p$.
\end{itemize}
We set $k'=k+4q$ and consider the instance $(G,\{s_1,s_2\},k')$ of  \textsc{Secluded Path}. Notice that $\tw(G_i)\leq n-1$ and $\Delta(G_i)\leq n-1$ for $i\in\{1,\ldots,p\}$ and $\tw(G)\leq n-1$ and $\Delta(G)\leq n$.

We claim that $G$ has an $(s_1,s_2)$-path $P$ with $|N_G[V(P)]|\leq k'$ if and only if there is $i\in\{1,\ldots,p\}$ such that $G_i$ has an $(s_1^i,s_2^i)$-path $P_i$ with $|N_{G_i}[V(P_i)]|\leq k$.

Let $P$ be an $(s_1,s_2)$-path $P$ in $G$ with $|N_G[V(P)]|\leq k'$. Consider the first vertex $u$ of $P$ starting from $s_1$ that is a leaf of $T_1$. Clearly,  $u\in\{s_1^i,s_2^i\}$ for some $i\in\{1,\ldots,p\}$. Without loss of generality we can assume that $u=s_1$. Notice that $P$ contains $s_2^i$ by the construction of $G$ and the $(s_1,s_2)$-subpath $P_i$ of $P$ is an $(s_1,s_2)$-path in $G_i$. It remains to observe that  $k'\geq |N_G[V(P)]|\geq 4q+|N_{G_i}[V(P_i)]|$ and, therefore, $|N_{G_i}[V(P_i)]|\leq k$.

Suppose that $G_i$ has an $(s_1^i,s_2^i)$-path $P_i$ with $|N_{G_i}[V(P_i)]|\leq k$ for some $i\in\{1,\ldots,p\}$. Let $P'$ be the unique $(s_1,s_1^i)$-path in $T_1$ and let $P''$ be the unique $(s_2^i,s_2)$-path in $T_2$. We have that for the $(s_1,s_2)$-path $P$ in $G$ obtained by the concatenation of $P'$, $P_i$ in the copy of $G_i$ and $P''$,
$|N_G[V(P)]|\leq k+4q=k'$. 

The proof for the case when the problem is parameterized by $(k-\ell)+t+\Delta$ uses the same  OR-composition. The difference is that now we assume that 
 two instances $(G,\{s_1,s_2\},k)$ and $(G',\{s_1',s_2'\},k')$ are equivalent if $|V(G)|=|V(G')|$, $k=k'$ and $s_1,s_2$ and $s_1\rq{},s_2\rq{}$ are at the same distance in $G$ and $G\rq{}$ respectively. Let $\ell$ be the distance between $s_1^i$ and $s_2^i$ in $G_i$ for   $i\in\{1,\ldots,p\}$. Then the length of a shortest $(s_1,s_2)$-path in $G$ is $\ell\rq{}=\ell+2q$. 
Hence $k\rq{}-\ell\rq{}=k-\ell+2q$.
\end{proof}

Observe that Theorem~\ref{thm:no-kern-tw} immediately implies that \textsc{Secluded Path} without costs has no polynomial kernel unless  $\classNP\subseteq\classCoNP/\text{\rm poly}$ when parameterized by $k$ or $k-\ell$.
The next natural question is if parameterization by a stronger parameter can lead to a polynomial kernel.
Let us note that  the treewidth of a graph is always at most the minimum size of its vertex cover.
The following theorem provides lower bounds for parameterization by the minimum size of a vertex cover. 

\begin{theorem}\label{thm:no-kern-vc}
\textsc{Secluded Path} without costs on graphs with the vertex cover number at most $w$ has no polynomial kernel unless  $\classNP\subseteq\classCoNP/\text{\rm poly}$ when parameterized by $w$.
\end{theorem}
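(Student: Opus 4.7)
The plan is to construct an OR-cross-composition into \textsc{Secluded Path} without costs, in the spirit of Theorem~\ref{thm:no-kern-tw}, and conclude via Theorem~\ref{thm:BJK}. The source problem is again \textsc{Secluded Path} without costs, which is NP-hard by~\cite{ChechikJPP12,ChechikJPP13}. The two binary-tree selectors of Theorem~\ref{thm:no-kern-tw} contribute $\Omega(t)$ to the vertex cover, so a compact composition whose vertex cover is polynomial in $\max_i|x_i|+\log t$ is required.

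My first step is to strengthen the polynomial equivalence relation so that the $t$ equivalent instances can be assumed to share a common vertex set $V$ of size $n$ and common terminals $s_1,s_2$, differing only in their edge sets. Declaring two instances equivalent iff they have the same $n$ and the same budget $k$ suffices: after relabeling, I may assume $V(G_1)=\ldots=V(G_t)=V$ and $s_1^i=s_1$, $s_2^i=s_2$ for every $i$. Placing the whole of $V$ into the vertex cover handles every edge internal to $V$ in one stroke and contributes only $n$, thereby avoiding the $\Omega(tn)$ blow-up that disjoint copies would incur.

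Next, I would introduce an $O(\log t)$-size selector that forces an $(s_1,s_2)$-path in the composed graph $G$ to pick a single index $i\in\{1,\ldots,t\}$ and to route through only edges of $G_i$. The plan is to attach, for every potential edge $e\in E(G_1)\cup\cdots\cup E(G_t)$, a small independent-set gadget adjacent to the endpoints of $e$ and to the subset of selector vertices whose bit-pattern encodes the set $\{i:e\in E(G_i)\}$ in binary; the only edges leaving $V$ go to the selector, so the total vertex cover of $G$ becomes $n+O(\log t)$, which is polynomial in $\max_i|x_i|+\log t$ as required.

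The hard part will be the exposure accounting: the selector, together with the "inactive" edge-gadgets and the witnesses of edges incident to $V$-vertices on the path, must contribute a fixed constant to $|N_G[V(P)]|$ that is independent of the chosen $i$, so that the new budget can be written uniformly as $k':=k+O(\log t)$. I intend to achieve this by padding each binary encoding to a common Hamming weight (appending complementary bits) and by inserting dummy low-degree vertices that normalise the closed neighbourhood of the selector and of each $V$-vertex across all indices. Once this calibration is in place, an analysis analogous to that of Theorem~\ref{thm:no-kern-tw} gives the equivalence ``some $G_i$ admits a secluded $(s_1,s_2)$-path of exposure $\leq k$ iff $G$ admits a secluded $(s_1,s_2)$-path of exposure $\leq k'$'', and Theorem~\ref{thm:BJK} yields the claim.
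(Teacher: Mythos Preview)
Your plan diverges substantially from the paper's route: the paper cross-composes from \textsc{$3$-Satisfiability}, not from \textsc{Secluded Path}. There the $t$ instances share their variable set; a secluded path first commits to a truth assignment by choosing, for each variable, one of two parallel length-$2$ arcs, and then traverses a chain of clause gadgets in which $O(\log t)$-sized ``selector'' layers $W_0,\ldots,W_m$ force all clause checks to come from a single instance. Crucially, the $3mt$ clause vertices form an independent set adjacent only to the $2n$ literal vertices and to the $W_j$'s, so the vertex cover is $4n+2q(m+1)=O(n+m\log t)$.

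Your construction, as sketched, has a genuine gap at exactly the point you flag as ``the hard part''. First, the phrase ``selector vertices whose bit-pattern encodes the set $\{i:e\in E(G_i)\}$'' is not well-defined: an arbitrary subset of $\{1,\ldots,t\}$ cannot be encoded by an $O(\log t)$-vertex selector. If you instead meant one gadget per pair $(e,i)$ wired to the $\lceil\log t\rceil$ bits of $i$, the selector side works, but the $V$-side breaks: every path vertex $v$ then exposes \emph{all} gadgets of \emph{all} instances for every edge incident to $v$, a quantity $\sum_i d_{G_i}(v)$ that depends on the whole input family and on $v$. Padding to equalise per-vertex gadget-degree does not rescue this, because a gadget whose edge has both endpoints on the path is counted once, so the total exposure still depends on how many $\bigl(\bigcup_i E(G_i)\bigr)$-chords the chosen path has --- a quantity that is path-dependent and instance-family-dependent, hence cannot be absorbed into a uniform budget $k'$. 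The same obstruction appears even earlier if you keep the edges of $\bigcup_i G_i$ inside $G$: then $N_G(v)\cap V=\bigcup_i N_{G_i}(v)$, so the exposure inside $V$ already reflects the union of all instances rather than any single one. This is precisely why the paper abandons self-composition here and exploits the variable/clause dichotomy of SAT, where the shared part (an assignment) is genuinely common to all instances and the per-instance part (clauses) lives in an independent set touched only through a small interface.
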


\begin{proof}
We show that the {\sc $3$-Satisfiability} problem OR-cross composes into \textsc{Secluded Path} without costs. 
Recall that \textsc{$3$-Satisfiability} asks for given boolean variables $x_1,\ldots,x_n$ and clauses $C_1,\ldots,C_m$ with 3 literals each, whether the formula $\phi=C_1\wedge\ldots\wedge C_m$ can be satisfied.
It is well-known that  \textsc{$3$-Satisfiability} is \classNP-complete~\cite{GareyJ79}.
We assume that two instances of \textsc{$3$-Satisfiability} are equivalent if they have the same number of variables and the same number of clauses.  

\begin{figure}[ht]
\centering\scalebox{0.7}{\input{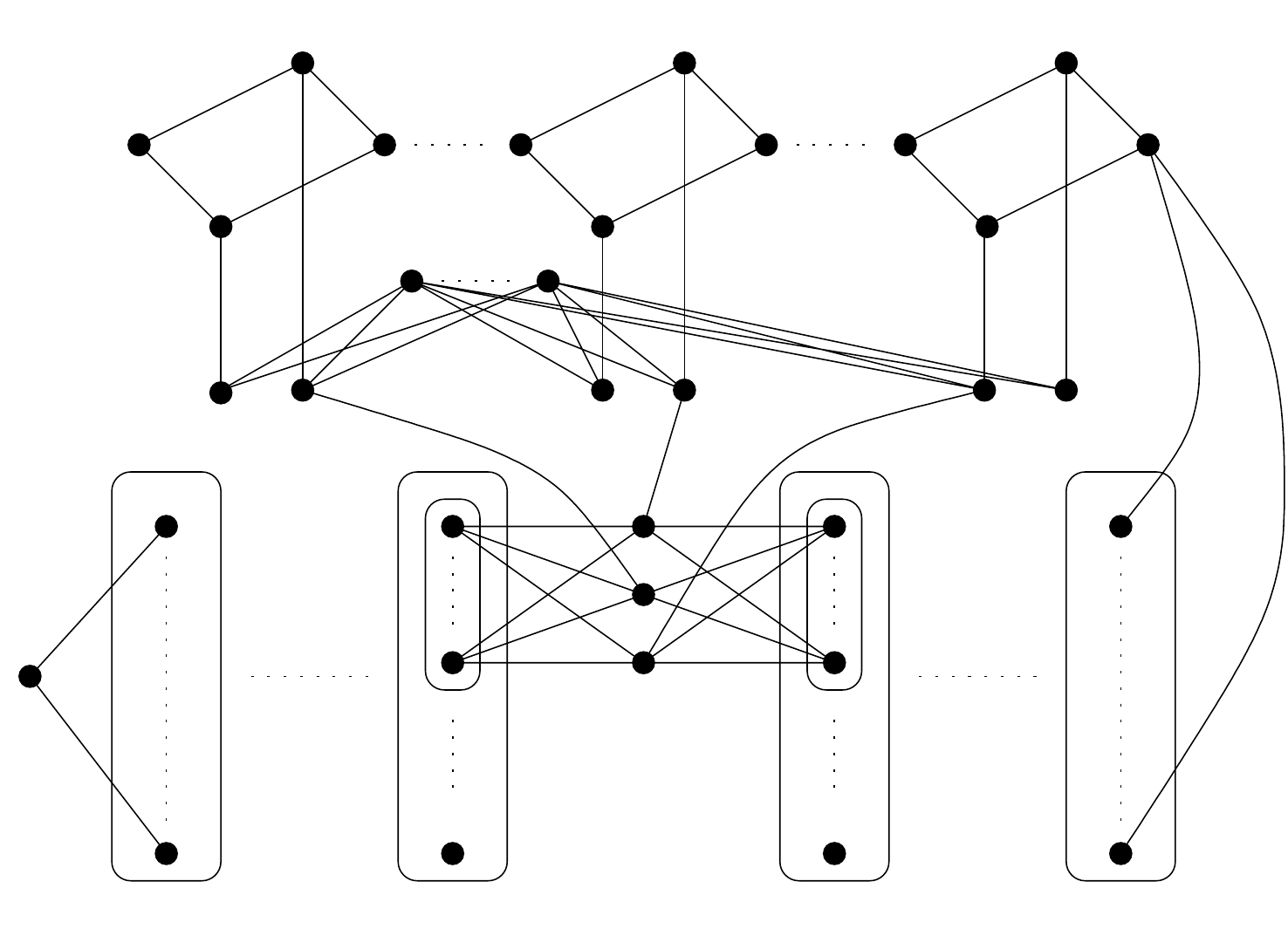_t}}
\caption{Construction of $G$.
\label{fig:G}}
\end{figure}

Consider $t$ equivalent instances of \textsc{$3$-Satisfiability} with the same boolean variables $x_1,\ldots,x_n$ and the sets of clauses $\mathcal{C}_i=\{C_1^i,\ldots,C_m^i\}$ for $i\in\{1,\ldots,t\}$. 
Without loss of generality we assume that $t=\binom{2q}{q}$ for a positive integer $q$; otherwise, we add minimum number of copies of $\mathcal{C}_1$ to get this property. Notice that $\binom{2q}{q}=\Theta(4^q/\sqrt{\pi q})$ and $q=O(\log t)$. Let $I_1,\ldots,I_t$ be pairwise distinct subsets of $\{1,\ldots,2q\}$ of size $q$. Notice that each $i\in\{1,\ldots,2q\}$ is included exactly in $d=\binom{2q-1}{q-1}$ sets. Let $k=(q+3d)m+3q+4n+2$.
We construct the graph $G$ as follows (see Fig.~\ref{fig:G}).
\begin{itemize}
\item[i)] Construct $n+1$ vertices $u_0,\ldots,v_n$. Let $s_1=u_0$.
\item[ii)] For each $i\in \{1,\ldots,n\}$, construct vertices $x_i,y_i,\overline{x}_i,\overline{y}_i$ and edges $u_{i-1}y_i,y_iu_i,y_ix_i,$ and $u_{i-1}\overline{y}_i,\overline{y}_iu_i,\overline{y}_i\overline{x}_i$.  
\item[iii)] For each $j\in\{0,\ldots,m\}$, construct a set of vertices\\ $W_j=\{w_1^j,\ldots,w_{2q}^j\}$.
\item[iv)] Construct a vertex $s_2$ and edges $u_2w_1^0,\ldots,u_2w_{2q}^0$  and  $w_1^ms_2,\ldots,w_{2q}^ms_2$.
\item[v)] For each $j\in\{1,\ldots,m\}$ and $h\in\{1,\ldots,t\}$, 
\begin{itemize}
\item construct 3 vertices $c_{jh}^1,c_{jh}^2,c_{jh}^3$;
\item construct edges $c_{jh}^1w_r^{j-1},c_{jh}^2w_r^{j-1},c_{jh}^3w_r^{j-1}$ and\\ $c_{jh}^1w_r^{j},c_{jh}^2w_r^{j},c_{jh}^3w_r^{j}$ for all $r\in I_h$;
\item consider the clause $C_j^h=(z_1\vee z_2\vee z_3)$ and for $l\in\{1,2,3\}$, construct an edge $c_{jh}^lx_i$ 
if $z_l=x_i$ for some $i\in\{1,\ldots,n\}$ and construct an edge $c_{jh}^l\overline{x}_i$
 if $z_l=\overline{x}_i$.  
\end{itemize}
\item[vi)] Construct $k$ vertices $v_1,\ldots,v_k$ and edges $x_iv_l,\overline{x}_iv_l$ for $i\in\{1,\ldots,n\}$ and $l\in\{1,\ldots,k\}$.
\end{itemize}
Observe that the set of vertices
$$X=(\cup_{i=1}^n\{x_i,y_i,\overline{x}_i,\overline{y}_i\})\cup(\cup_{j=0}^m W_j)$$
is a vertex cover in $G$ of size $4n+2q(m+1)=O(n+m\log t)$.

We show that $G$ has an $(s_1,s_2)$-path $P$ with $|N_G[V(P)]|\leq k$ if and only if there is $h\in\{1,\ldots,t\}$ such that $x_1,\ldots,x_n$ have a truth assignment satisfying all the clauses of $\mathcal{C}_h$.

Suppose that $x_1,\ldots,x_n$ have an assignment that satisfies all the clauses of $\mathcal{C}_h$. First, we construct the $(s_1,u_n)$-path $P'$ by the concatenation of the following paths: for each $i\in\{1,\ldots,n\}$, we take the path $u_{i-1}y_iu_i$ if $x_i=true$ in the assignment and we take $u_{i-1}\overline{y}_iu_i$ if $x_i=false$. 
Let $r\in I_h$. We construct the $(w_r^0,w_r^m)$-path $P''$ by concatenating $w_r^{j-1}c_{jh}^{l_j}w_r^j$ for $j\in\{1,\ldots,m\}$ where $l_j\in\{1,2,3\}$ is chosen as follows. Each clause $C_j^h=z_1\vee z_2\vee z_3=true$ for the assignment, i.e., $z_l=true$ for some $l\in\{1,2,3\}$; we set $l_j=l$.   
Finally, we set $P=P'+u_nw_h^0+P''+w_h^ms_2$.
It is straightforward to verify that $|N_G[V(P)]|=k$.

Suppose now that there is an $(s_1,s_2)$-path in $G$ with $|N_G[V(P)]|\leq k$. We assume that $P$ is an induced path.
Observe that $x_i,\overline{x_i}\notin V(P)$ for $i\in\{1,\ldots,n\}$, because $d_G(x_i),d_G(\overline{x}_i)>k$.
Therefore, $P$ has an $(s_1,u_n)$-subpath $P'$ such that $u_0,\ldots,u_n\in V(P')$ and for each $i\in\{1,\ldots,n\}$, either $y_i\in V(P')$ or $\overline{y}_i\in V(P')$. We set the variable $x_i=true$ if $x_i\in V(P')$ and $x_i=false$ otherwise. We show that this truth assignment satisfies all the clauses of some $\mathcal{C}_r$.

Observe that $|N_G[V(P')]|=4n+2q+1$. Clearly, $s_2\in V(P)$. Notice also that $P$ has at least one vertex in each $W_j$ for $j\in\{0,\ldots,m\}$, and for each $j\in\{1,\ldots,m\}$, at least one vertex among the vertices $c_{jh}^l$ for $h\in\{1,\ldots,t\}$ and $l\in\{1,2,3\}$ is in $P$. For each $j\in\{1,\ldots,m\}$, any two verices $w_r^{j-1}\in W_{j-1}$ and $w_{r'}^j\in W_j$ have at least $3d$ neighbors among the vertices $c_{jf}^l$ for $f\in\{1,\ldots,t\}$ and $l\in\{1,2,3\}$. Moreover, if $r\neq r'$, they have at least $3d+6$ such neighbors, because there are two subsets $I,I'\subseteq\{1,\ldots,2q\}$ of size $q$ such that $r\in I\setminus I'$ and $r'\in I'\setminus I$. For each $j\in \{1,\ldots,m-1\}$, any two vertices $c_{jh}^l$ and $c_{j+1~h'}^{l'}$ for $h,h'\in\{1,\ldots,t\}$ and $l,l'\in\{1,2,3\}$ have at least $q$ neighbors in $W_j$. Moreover, if $h\neq h'$, they have at least $q+2$ such neighbors, because $|I_h\cup I_{h'}|\geq q+2$.  Taking into account that $d_G(s_2)=2q$, we obtain that 
$$k\geq |N_G[V(P)]|\geq |N_G[V(P')]|+3dm+q(m-1)+2q+1=k.$$
It implies that 
$P$ has exactly one vertex in each $W_j$ for $j\in\{0,\ldots,m\}$, and 
for each $j\in\{1,\ldots,m\}$, exactly one vertex among the vertices $c_{jh}^l$ for $h\in\{1,\ldots,t\}$ and $l\in\{1,2,3\}$ is in $P$. Moreover, there is $r\in\{1,\ldots,2q\}$ and $h\in\{1,\ldots,t\}$ such that $w_r^j\in V(P)$ and $c_{jh}^{l_j}\in V(P)$ for $j\in\{0,\ldots,m\}$ and $l_j\in\{1,2,3\}$. We claim that all the clauses of $\mathcal{C}_r$ are satisfied. Otherwise, if there is a clause $C_j^r=(z_1\vee z_2\vee z_3)$ that is not satisfied, then the neighbors of $c_{jh}^1,c_{jh}^2,c_{jh}^3$ among the vertices $x_i,\overline{x}_i$ for $i\in\{1,\ldots,n\}$ are not in $N_G[V(P')]$. It immediately implies that $|N_G[V(P)]|>k$; a contradiction.
\qed
\end{proof}

However, if we consider even stronger parameterization, by vertex cover number and by the size of the solution, then we obtain the following theorem.

\begin{theorem}\label{thm:kern-vc}
The \textsc{Secluded Steiner Tree} problem admits a kernel with at most $2w(k+1)$ vertices on graphs with the vertex cover number at most $w$. 
\end{theorem}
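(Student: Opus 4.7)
Plan: compute a vertex cover $U\subseteq V(G)$ with $|U|\le w$ using a standard \classFPT-algorithm for \textsc{Vertex Cover}, and let $I=V(G)\setminus U$ be the resulting independent set. Since any feasible tree $T$ satisfies $|V(T)|\le|N_G[V(T)]|\le k$, I immediately reject whenever $|S|>k$ or some terminal has closed neighborhood of size greater than $k$. The reduction rule is a marking scheme: keep every vertex of $U$ and every terminal, and for each $u\in U$ additionally mark up to $k+1$ of its non-terminal neighbors in $I$. Delete the remaining non-terminal $I$-vertices. The reduced graph $G'$ then has at most $w+k+w(k+1)$ vertices, which is bounded by $2w(k+1)$ once $w,k\ge 1$; the trivial corner cases $w=0$ or $k=0$ are handled directly.

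For correctness in the ``$\Leftarrow$'' direction, if $T'$ is a solution in $G'$ then every $u\in V(T')\cap U$ has $|N_{G'}[u]|\le k$, so $u$ has at most $k$ non-terminal $I$-neighbors in $G$; by the marking rule (which keeps $k+1$ of them) \emph{all} of $u$'s non-terminal $I$-neighbors are preserved. Consequently $N_G[V(T')]=N_{G'}[V(T')]$, and the cost restricted to surviving vertices equals the cost in $G$, so $T'$ is a genuine solution in $G$. For the ``$\Rightarrow$'' direction, start from an inclusion-minimal solution $T$ in $G$; all leaves of $T$ are terminals, and every non-marked non-terminal $v\in V(T)\cap I$ must have degree at least $2$ in $T$ with all neighbors in $V(T)\cap U$. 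Using $|V(T)\cap U|\le k$, I argue that for each such $v$ a marked $I$-vertex $v'$ exists whose neighborhood covers $N_T(v)$; replacing $v$ by $v'$ keeps $T$ connected and yields $N_G[V(T')]\subseteq N_G[V(T)]$, preserving both the size and cost bounds.

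The main obstacle is precisely this replacement step for non-marked \emph{internal} $I$-vertices. The naive per-$u$ marking of $k+1$ neighbors works immediately when $v$ has to cover only a single $U$-neighbor, but may fail when $v$ simultaneously bridges several vertices of $V(T)\cap U$. The remedy is to refine the marking so that for every plausible neighborhood type (a subset of $U$ that could play the role of $N_T(v)$) we retain enough witnesses; since any feasible $T$ uses only $O(k)$ distinct $I$-neighborhood types and only $O(k)$ vertices of $U$, a careful accounting shows that the refined marking still fits within the $2w(k+1)$ budget. Verifying this bookkeeping, together with the argument that each non-marked internal $v$ admits a replacement that does not enlarge $N_G[V(T)]$, is the technical heart of the proof.
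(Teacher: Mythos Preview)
Your proposal has two real gaps. First, a kernelization must run in polynomial time, but finding a minimum vertex cover of size at most $w$ via an \classFPT{} algorithm takes time $f(w)\cdot n^{O(1)}$; the paper instead takes the end-vertices of a maximal matching, obtaining a cover $X$ with $|X|\le 2w$ in linear time. Second, the ``replacement'' step you flag as the main obstacle is both unsubstantiated and unnecessary. Unsubstantiated, because your ``refined marking'' is only a sketch, with no argument that $O(wk)$ marks suffice to cover all relevant neighborhood types inside $U$. Unnecessary, because the naive marking already works via a simpler observation you missed: if a non-terminal $v\in I$ is unmarked, then every $u\in N_G(v)\subseteq U$ failed to mark $v$, so $u$ has more than $k+1$ non-terminal $I$-neighbors and hence $d_G(u)>k$; thus no such $u$ can lie in a feasible $T$, so $v$ has no $T$-neighbor and (for $|S|\ge 2$) $v\notin V(T)$. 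This gives $V(T)\subseteq V(G')$ directly---no replacement needed---and together with your (correct) backward argument yields equivalence.

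The paper's proof exploits exactly this observation, but more directly. It splits the $2$-approximate cover $X$ into $Y=\{u\in X: d_G(u)\le k\}$ and $X\setminus Y$, notes that no feasible $T$ meets $X\setminus Y$ or any independent-set vertex whose neighbors all lie in $X\setminus Y$, and therefore retains only $Y$, $N_G(Y)\setminus X$, and the part of $X\setminus Y$ adjacent to these (padded with $k$ cost-$0$ dummies so those vertices still have degree exceeding $k$ in the reduced graph). Since each vertex of $Y$ has at most $k$ neighbors, the bound $|X|+k|Y|\le 2w(k+1)$ is immediate. Your scheme can be repaired along the same lines: use the polynomial-time approximate cover, mark $I$-neighbors only for low-degree cover vertices, and drop the replacement argument entirely.
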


\begin{proof}
Let $(G,S,k)$ be an instance of  \textsc{Secluded Steiner Tree}. We assume that $|S|\geq 2$, as otherwise the problem is trivial.
Our kernelization algorithm uses the following steps.

\medskip
\noindent
{\bf Step 1.} If $G$ is disconnected, then return a no-answer and stop if there are distinct components of $G$ that contain terminals, and construct the instance $(G',S,k)$ if there is a component $G'$ of $G$ with $S\subseteq V(G)$.

\medskip
It is straightforward to see that our first step is safe to apply, i.e., it either returns a correct answer or creates an equivalent instance of our problem. From now we assume that $G$ is connected.

\medskip
\noindent
{\bf Step 2.} Find a set of vertices $X$ by taking end-vertices of the edges of a maximal matching in $G$. If $|X|>2w$, then return a no-answer and stop.

\medskip
It is well-known (see e.g.~\cite{GareyJ79}) that $X$ is a vertex cover and $|X|$ gives a factor-2 approximation of the vertex cover number. In particular, if $|X|>2w$, then $G$ has no vertex cover of size at most $w$. 

\medskip
\noindent
{\bf Step 3.} Let $Y=\{v\in X|d_G(v)\leq k\}$, $I=N_G(Y)\setminus X$ and $I'=V(G)\setminus(X\cup N_G(Y))$. If $S\cap( X\setminus Y)\neq\emptyset$ or $S\cap I'\neq\emptyset$, then return a no-answer and stop.
  
\medskip
Clearly, if $T$ is a connected subgraph of $G$ with $S\subseteq V(T)$ such that $|N_G[V(T)]|\leq k$, then $V(T)\cap (X\setminus Y)=\emptyset$. We also have that $V(T)\cap I'=\emptyset$. To see it, assume that $u\in V(T)\cap I'$. Since $|S|\geq 2$, $T$ has no isolated vertices and, therefore, $u$ has a neighbor $v$ in $T$, but then $v\in X\setminus Y$; a contradiction.   
It proves that Step~3 is safe.

\medskip
\noindent
{\bf Step 4.} Delete the vertices of $I'$ and $X\setminus N_G[Y\cup I]$. 
If $I'\neq\emptyset$, then add $k$ vertices of cost $0$ and make them adjacent to the vertics of $N_G(Y\cup I)\cap X$.

\medskip
Denote by $G'$ the graph obtained on Step~4. If $T$ is a connected subgraph of $G$ such that $S\subseteq V(T)$ and $|N_G[V(T)]|\leq k$, then $T$ is a subgraph of $G'$ and $N_{G'}[V(T)]=N_G[V(T)]$, because $V(T)\cap (X\setminus Y)=\emptyset$ and $V(T)\cap I'=\emptyset$.  Suppose that $T$ is a connected subgraph of $G'$ with $S\subseteq V(T')$ such that $|N_{G'}[V(P)]|\leq k$. Then $T$ does not contain any added vertex, because they are adjacent only to the vertices of degree at least $k+1$,  and $V(T')\cap (X\setminus Y)=\emptyset$. Hence,  
 $T$ is a subgraph of $G$ and $N_{G'}[V(T)]=N_G[V(T)]$.

Now we give an upper bound for  the size of $G'$.
If $I'=\emptyset$, then $V(G)\setminus X\subseteq N_G(Y)$ and $|V(G')|\leq 2w(k+1)$. If $I'\neq\emptyset$, then $X\setminus X\neq\emptyset$ and, therefore, $|V(G')|\leq |X|+|Y|k+k\leq |X|(k+1)\leq 2w(k+1)$.

It is straightforward to see that Steps~1--4 can be done in polynomial time and it concludes the proof.
\end{proof}

Recall that Chechik et al. \cite{ChechikJPP13}  showed that if  the treewidth of the input graph does not exceed $t$, then  the \textsc{Secluded Steiner Tree} problem is solvable in time 
$2^{\cO(t\log{t})} \cdot n^{O(1)}\cdot \log{\cost}$, 
where $\cost$ is the maximum value of $\costfunction$ on  an input graph $G$. 
We observe that the running time could be improved by applying modern techniques for dynamic programming over tree decompositions proposed by Cygan et al. \cite{cut-and-count}, 
Bodlaender et al.~\cite{BodlaenderCKN13} and Fomin et al.~\cite{fomin2014efficient}. Essentially, the algorithms for \textsc{Secluded Steiner Tree}
are constructed along the same lines as the algorithms for \textsc{Steiner Tree} described in~\cite{cut-and-count,BodlaenderCKN13,fomin2014efficient}. Hence, for simplicity, we only sketch the randomized algorithm based on the Cut\&Count technique introduced by   Cygan et al.~\cite{cut-and-count} for \textsc{Secluded Steiner Tree} without costs in this conference version of our paper. 

\begin{theorem}\label{thm:CandC}
There is a 
true-biased
Monte Carlo algorithm 
solving the
\textsc{Secluded Steiner Tree} without costs in time $4^{t}\cdot n^{\cO(1)}$, given a tree decomposition of width at most $t$.
\end{theorem}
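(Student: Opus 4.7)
\noindent
The plan is to follow the Cut\&Count template of Cygan et al.~\cite{cut-and-count}. Call a pair $(W,N)$ a \emph{candidate} when $S \subseteq W \subseteq N \subseteq V(G)$, $|N| \le k$, and $N_G[W] \subseteq N$; the connectivity requirement that $G[W]$ be connected is what I relax and recover by counting. For the Cut part, a \emph{consistent cut} of a candidate is a map $\sigma \colon W \to \{0,1\}$ with $\sigma(s)=0$ for every $s \in S$ and $\sigma(u)=\sigma(v)$ for every edge $uv \in E(G[W])$. If $G[W]$ is connected only the all-zero cut is consistent; if $S$ is split across components of $G[W]$ no consistent cut exists; and if $S$ lies in a single component while $G[W]$ has $c \ge 2$ components in total, the remaining $c-1$ components may be colored freely, producing $2^{c-1}$ consistent cuts, an even number. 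Summing over all triples $((W,N),\sigma)$ modulo $2$ therefore counts exactly the connected candidates containing $S$.

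To turn parity counts into a decision procedure I will apply the Isolation Lemma: assign independent uniform weights $w(v) \in \{1,\ldots,2n\}$ to every $v \in V(G)$ and define the weight of a candidate $(W,N)$ to be $w(N)$. With constant probability some value $W^{\star}$ is attained by a unique minimum-weight connected candidate; the algorithm then computes, for every $k' \le k$ and every integer $W^{\star} \le 2n^2$, the parity of the number of triples $((W,N),\sigma)$ with $|N|=k'$ and $w(N)=W^{\star}$, and returns a yes-answer as soon as a nonzero parity is found. Since an odd parity can only arise from a connected candidate, every yes-answer is correct, i.e.\ the algorithm is true-biased; standard amplification drives the overall success probability arbitrarily close to $1$.

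The parities themselves will be computed by dynamic programming on the given tree decomposition, made nice in the standard way. For each bag and each vertex $v$ in it I store one of four labels: $W$-side-$0$, $W$-side-$1$, $N \setminus W$, or $\notin N$; together with auxiliary counters for $|N|$ and $w(N)$ this yields $4^{|B|} \cdot n^{O(1)}$ states per bag. Introduce-edge nodes enforce the cut constraint (forbid an edge whose endpoints are labeled $W$-side-$0$ and $W$-side-$1$) and the neighborhood constraint (forbid an endpoint labeled in $W$ when the other endpoint is labeled $\notin N$). Introduce-vertex nodes create the four possible labels for the new vertex and charge the size and weight counters whenever the vertex enters $N$. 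Forget nodes drop $v$ after verifying, for $v \in S$, that its label is $W$-side-$0$. Join nodes combine two tables by requiring pointwise identical labels on the shared bag. Summing over all states compatible with the root yields the desired parity modulo $2$, and the total running time is $4^{t} \cdot n^{O(1)}$.

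The main obstacle will be arguing that the four-label DP simultaneously enforces $N_G[W] \subseteq N$ and cut consistency in a single pass---in particular, that checking each edge once at its introduce-edge node suffices to forbid every later configuration with a $W$-vertex adjacent to a $\notin N$ vertex, so that no additional state is needed to remember which neighbors of a bag vertex have yet to be verified. Once this bookkeeping is in place, correctness reduces to the Cut\&Count parity argument above, the Isolation Lemma gives a constant per-trial success probability, and the running time follows by summing $4^{|B|} \cdot n^{O(1)}$ over the $O(n)$ bags.
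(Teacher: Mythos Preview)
Your high-level plan---four labels, Cut\&Count, nice-tree DP---matches the paper's approach and would give the claimed running time. There is, however, a genuine gap in your isolation step: weighting a candidate $(W,N)$ by $w(N)$ alone cannot isolate connected candidates, because distinct candidates $(W_1,N)$ and $(W_2,N)$ that share the same $N$ always receive the same weight. The Isolation Lemma therefore does not yield a unique minimum-weight connected \emph{candidate}; at best it isolates a set $N^\star$, but the number of connected $W$ with $(W,N^\star)$ a candidate may be even, giving parity $0$ even on yes-instances. Concretely, take $G=K_4$, $S$ any two of its vertices, and $k=4$: every $W\supseteq S$ is connected and has $N_G[W]=V(G)$, so the only admissible $N$ is $V(G)$ and there are exactly four connected candidates, all with identical $(|N|,w(N))$. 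Your algorithm returns a no-answer on this yes-instance regardless of the random weights.

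The paper avoids this by weighting the solution set $X$ (your $W$) rather than its closed neighbourhood, and by counting modulo $4$ with terminals allowed on either side of the cut (so a connected $X$ contributes~$2$). The key observation is that once one fixes $s'$ to be the minimum possible $|N_G[X]|$, any pair $(X,Y)$ with $|X\cup Y|=s'$ is determined by $X$ alone (necessarily $Y=N_G(X)$), so isolating on $w(X)$ produces a unique contribution of $2\bmod 4$ at level $(w',s')$. Your mod-$2$ variant with $\sigma(s)=0$ for $s\in S$ would work just as well provided you switch the weight to $w(W)$ and track $(|N|,w(W))$ in the DP; the remainder of your argument then goes through unchanged.
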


We need some additional definitions and auxiliary results.

Let $(\mathcal{B},T)$ be a tree decomposition of a graph $G$, $\mathcal{B}=\{B_i\mid i\in V(T)\}$.
We distinguish one vertex $r$ of $T$ which is said to be a \emph{root} of $T$. This introduces natural
parent-child and ancestor-descendant relations in the tree $T$. 
We  say that  a rooted tree decomposition $(\mathcal{B},T)$ is an \emph{extended nice} tree decomposition if the following
conditions are satisfied:
\begin{itemize}
\item $X_r=\emptyset$ and $X_\ell=\emptyset$ for every leaf $\ell$ of $T$. In other words, all the leaves as
well as the root contain empty bags.
\item For every edge $uv\in E(G)$, there is the unique bag $B_i$ assigned to $uv$ such that $u,v\in B_i$; we say that this bag \emph{is labeled} by $uv$.
\item Every non-leaf node of $T$ is of one of the following three types:
\begin{itemize}
\item {\bf Introduce vertex node:} a node $h$ with exactly one child $h\rq{}$ such that $B_h =B_{h\rq{}}\cup\{v\}$ for some vertex $v\notin B_{h\rq{}}$; we say that $v$ \emph{is introduced} at $h$.
\item {\bf Introduce edge node:} a node $h$ labeled with an edge $uv\in E(G)$ such that $u,v\in B_h$, and with exactly one child $h\rq{}$ such that $B_h=B_{h\rq{}}$. We say
that edge $uv$ \emph{is introduced} at $h$. 
\item {\bf Forget node:} a node $h$ with exactly one child $h\rq{}$ such that $B_h=B_{h\rq{}}\setminus\{w\}$ for some vertex $w\in B_{h\rq{}}$; we say that $w$ \emph{is forgotten} at $h$.
\item {\bf Join node:} a node $h$ with exactly two children $h_1$ and  $h_2$ such that $B_h= B_{h_1}=B_{h_2}$ .
\end{itemize}
\item All the edges incident to a vertex $v\in V(G)$ are introduced immediately after $v$ is introduced.
\end{itemize}
Using the same arguments as in~\cite{Kloks94}, it is straightforward to show that for a given tree decomposition $(X,T)$ of a graph $G$ of width $t$, an extended nice tree decomposition of $G$ of  width at most $t$ such that the total size of the obtained tree is $O(t^2|V(T)|)$ can be constructed in linear time.

For a function $w \colon U \to \mathbb{Z}$ and a set $S \subseteq U$, let $w(S)=\sum_{u \in S}w(u)$. We say that $w$ \emph{isolates} a set family $\mathcal{F} \in 2^{U}$ if there is a unique $S' \in \mathcal{F'}$ satisfying $w(S')=\min_{S \in \mathcal{F}}w(S)$.
The Cut\&Count approach uses the following statement proved by Mulmuley et al.~\cite{MulmuleyVV87}.

\begin{lemma}[Isolation Lemma, \cite{MulmuleyVV87}]\label{isolation_lemma}
Let $\mathcal{F} \subseteq 2^{U}$ be a set family over a universe $U$ with $|\mathcal{F}| > 0$. 
For each $u \in U$ choose a weight $w(u) \in \{1, \ldots, N\}$ uniformly and independently at
random. Then
 \[ \operatorname{Pr}(w \text{ isolates } \mathcal{F}) \geq 1 - \frac{|U|}{N} \]
\end{lemma}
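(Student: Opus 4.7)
The plan is to combine the Cut\&Count framework of Cygan et al.~\cite{cut-and-count} with the Isolation Lemma (Lemma~\ref{isolation_lemma}). A yes-instance is characterised by the family
\[
\mathcal{F}=\{\,C\subseteq V(G):\ G[C]\text{ is connected},\ S\subseteq C,\ |N_G[C]|\le k\,\}
\]
being nonempty. First I would assign independent uniform weights $w(v)\in\{1,\dots,2n\}$ to every $v\in V(G)$; by Lemma~\ref{isolation_lemma} with $N=2n$, with probability at least $1/2$ the family $\mathcal{F}$ is isolated, i.e.\ admits a unique minimum-weight element $C^{\star}$ of weight $W^{\star}=w(C^{\star})$, whose exposure I will denote $j^{\star}=|N_G[C^{\star}]|$.

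For every $j\in\{|S|,\dots,k\}$ and every integer $W\in\{0,\dots,2n^{2}\}$ I would compute, modulo~$2$, the number $A(j,W)$ of pairs $(C,(C_0,C_1))$ such that $C_0\cup C_1=C$, $C_0\cap C_1=\emptyset$, $s_1\in C_0$, $S\subseteq C$, $|N_G[C]|=j$, $w(C)=W$, and no edge of $G$ joins $C_0$ with $C_1$. A candidate $C$ for which $G[C]$ is connected admits exactly one such partition (namely $C_0=C$), whereas a candidate with $c\ge 2$ components admits the even number $2^{c-1}$. Hence
\[
A(j,W)\equiv\bigl|\{C\in\mathcal{F}:\ |N_G[C]|=j,\ w(C)=W\}\bigr|\pmod 2,
\]
and whenever the random weights isolate $\mathcal{F}$ we have $A(j^{\star},W^{\star})=1$, while in a no-instance $A\equiv 0$ identically. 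The algorithm outputs \emph{yes} iff some $A(j,W)$ is odd, which is true-biased; a constant number of independent trials amplifies the success probability.

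The evaluation of $A(j,W)$ is a dynamic program on an extended nice tree decomposition whose state at each bag is a labelling of the bag vertices by one of four markers $\{C_0,C_1,B^{+},B^{-}\}$, together with the accumulated size/weight $(j',W')$ of the already forgotten vertices. Here $B^{-}$ marks a non-$C$ vertex whose neighbours in $C$ have not yet been witnessed, and $B^{+}$ marks a non-$C$ vertex for which at least one such neighbour has been introduced. Because every edge is introduced while both endpoints are still in the bag, by the time a vertex is forgotten all its incident edges have been processed, and its label is forced to correctly reflect whether it lies in $N_G[C]$. The transitions are routine: introduce-vertex admits initial labels from $\{C_0,C_1,B^{-}\}$ and enforces $s_1\in C_0$ and $s_i\in C_0\cup C_1$; introduce-edge $uv$ discards assignments labelling $\{u,v\}$ as $\{C_0,C_1\}$ and upgrades any $B^{-}$ endpoint to $B^{+}$ when the other endpoint lies in $C_0\cup C_1$; forget adds $1$ to $j'$ (and $w(v)$ to $W'$ only for labels in $\{C_0,C_1\}$), rejects entries that forget a terminal outside $C_0\cup C_1$ or $s_1$ outside $C_0$, and sums modulo~$2$ over the forgotten label; join adds the counters of matching bag assignments.

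The main obstacle is to ensure that every $C\in\mathcal{F}$ contributes to $A$ with multiplicity exactly~$1$ rather than with a binomial multiplicity arising from freedom to enlarge the boundary; the $B^{+}/B^{-}$ refinement is precisely what pins $B$ to equal $N_G(C)\setminus C$ at the moment of forgetting, so that the isolated element $C^{\star}$ is the unique contributor to $A(j^{\star},W^{\star})$. With four labels per bag vertex, $\cO(k)$ choices of $j$, $\cO(n^{2})$ choices of $W$, $\cO(n)$ bags in the decomposition and constant work per transition, a single trial of the DP runs in time $4^{t}\cdot n^{\cO(1)}$, matching the claimed bound.
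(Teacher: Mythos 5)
You have proved the wrong statement. The statement you were asked to address is the Isolation Lemma itself: that for independent uniform weights $w(u)\in\{1,\dots,N\}$ on a universe $U$, a nonempty family $\mathcal{F}\subseteq 2^{U}$ has a unique minimum-weight member with probability at least $1-|U|/N$. Your write-up instead takes the Isolation Lemma as a black box (you literally cite ``by Lemma~\ref{isolation_lemma} with $N=2n$\dots'') and goes on to prove Theorem~\ref{thm:CandC}, the Cut\&Count algorithm for \textsc{Secluded Steiner Tree} parameterized by treewidth. Nothing in your argument bears on why the inequality $\operatorname{Pr}(w\text{ isolates }\mathcal{F})\ge 1-|U|/N$ holds.

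For the record, the paper does not prove the Isolation Lemma either; it is quoted from Mulmuley, Vazirani and Vazirani~\cite{MulmuleyVV87}. The standard argument you should have given is short: for each $u\in U$ set
\[
\alpha(u)=\min_{S\in\mathcal{F},\,u\notin S}w(S)\;-\;\min_{S\in\mathcal{F},\,u\in S}w(S\setminus\{u\}),
\]
a quantity that does not depend on $w(u)$. Call $u$ \emph{singular} if $w(u)=\alpha(u)$; since $w(u)$ is uniform on $N$ values and independent of $\alpha(u)$, $\operatorname{Pr}(u\text{ singular})\le 1/N$. If $w$ fails to isolate $\mathcal{F}$ there are two distinct minimum-weight sets $S_1,S_2$ and some $u\in S_1\setminus S_2$; a direct check shows such a $u$ is singular. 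Hence $\operatorname{Pr}(\text{not isolated})\le\operatorname{Pr}(\exists u\text{ singular})\le |U|/N$ by the union bound, which is the claim.
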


\begin{proof}[of Theorem~\ref{thm:CandC}]
We will search for a subset of vertices $X \subseteq V(G)$
such that
\begin{equation}
\text{$S \subseteq X$, $G[X]$ is connected, and $|N_G[X]| \le k$.}
\label{eq:f1}
\end{equation}
It is not difficult to see that such a set $X$ exists if and only if there exists a pair $(X,Y)$ of disjoint sets such that
\begin{equation}
\text{$S \subseteq X$, $G[X]$ is connected, $N_G[X] \subseteq X \cup Y$, and $|X|+|Y| \le k$}
\label{eq:f2}
\end{equation}
(for this, take $Y=N_G[X] \setminus X$).
We use the standard dynamic programming on tree decompositions together with the cut and count technique.

Assume that each vertex $v \in V$ is assigned an integer weight~$w(v)$.
To use dynamic programming we  relax the restriction that $G[X]$ is connected. Namely, we view $X$ as a union of two disjoint sets $X_0$ and $X_1$ between them. Let $\mathcal{R}_{w, s}$ be the set of all disjoint triples $(X_0, X_1, Y)$ such that
\begin{multline}
\text{$S \subseteq X_0 \cup X_1$, $N_G[X_0] \cap X_1 = \emptyset$, $N_G[X_0 \cup X_1] \subseteq X_0 \cup X_1 \cup Y$,} \\ \text{$w(X_0 \cup X_1) = w$, and
$|X_0 \cup X_1 \cup Y| = s$.}
\label{eq:f3}
\end{multline}
Note that any pair $(X,Y)$ satisfying \eqref{eq:f2} such that $G[X]$ consists of $l$ connected components, contributes exactly $2^l$ triples to $\mathcal{R}_{w,s}$
(just because each of the $l$ connected components can go to either $X_0$ or $X_1$). Hence if we compute $|\mathcal{R}_{w,s}|$ modulo $4$ all pairs $(X,Y)$ with disconnected $X$ will cancel out.

Let now $s'$ be the minimum possible integer such that there exists $X \subseteq V$ with $|N_G[X]|=s'$ satisfying~\eqref{eq:f1}. 
Consider a set family $\mathcal{F} \subseteq 2^{V(G)}$ consisting of all such sets $X$ 
(i.e., $X$ satisfies~\eqref{eq:f1} and $N_G[x]=s'$).
Lemma~\ref{isolation_lemma} guarantees that if each vertex $v \in V(G)$ is assigned a random weight from 
$\{1,\ldots,2n\}$
then
$\mathcal{F}$ contains a unique set $X$ such that $w(X)=w'$
where $w'=\min_{S \in \mathcal{F}}w(S)$ with probability at least~$1/2$. This in turn implies that $|\mathcal{R}_{w',s'}| \equiv 2 \pmod{4}$ with probability at least $1/2$. This allows us to conclude that with probability at least $1/2$ we will find $s'$ by computing 
$|\mathcal{R}_{w,s}| \bmod{4}$ for all $w$ and~$s$.
We turn to show how to compute this.

Recall that we are given a tree decomposition $T$ of $G$ of width $t$.
Without loss of generality assume that the given tree decomposition is an extended nice decomposition. For a vertex $h \in V(T)$,
let $B_h \subseteq V(G)$ be its bag, 
$V_h \subseteq V(G)$ and $E_h\subseteq E(G)$ be all the vertices end edges of $G$ respectively that are introduced in the subtree of $T$ rooted at $h$, and
$G_h$ be a graph on the vertex set $V_h$ containing all
the edges introduced in that subtree.  

By a \emph{coloring} of a bag $B_h$ we mean
a mapping $f \colon B_h \to \{0_0, 0_1, 1_0, 1_1\}$ assigning four different colors to the vertices
of the bag.
\begin{itemize}
\item
\textbf{Red}, represented by $1_0$. The meaning is that all red vertices have to be 
contained in~$X_0$.
\item
\textbf{Blue}, represented by $1_1$. The meaning is that all blue vertices have to be
conatained in~$X_1$.
\item
\textbf{Green}, represented by $0_1$. The meaning is that all green vertices have to be
contained in~$Y$.
\item
\textbf{White}, represented by $0_0$. The meaning is that all white vertices do not appear in $X_0 \cup X_1 \cup Y$. 
\end{itemize}

Given a coloring $f$ of a bag $B_h$, we say that a triple $(P_0, P_1, Q)$ of pairwise disjoint subsets of $V_h$ is \emph{nice} with respect to $t$ and $f$ if 
\begin{itemize}
\item all the vertices from $B_t$ are colored properly:
\begin{equation}
f^{-1}(1_0) = P_0 \cap B_t, \, 
f^{-1}(1_1) = P_1 \cap B_t, \, 
f^{-1}(0_1) = Q \cap B_t \, .
\label{eq:n1}
\end{equation}
\item there are no edges between vertices from $P_0$ and $P_1$ in $G_t$:
\begin{equation}
\text{$uv\not \in E_h$ for $u \in P_0, v \in P_1$}
\label{eq:n2}
\end{equation}
\item any neighbor of a vertex from $P_0 \cup P_1$ lies in
$P_0 \cup P_1 \cup Q$: 
\begin{equation}
\text{if $u \in P_0 \cup P_1$ and $uv \in E_h$ then $v \in P_0 \cup P_1 \cup Q$.}
\label{eq:n3}
\end{equation}
\end{itemize}
Accordingly, the \emph{size} of the triple $(P_0, P_1, Q)$
is $|P_0 \cup P_1 \cup Q|$ and its \emph{weight}
is $w(P_0 \cup P_1)$.

We are now ready to define a state of our dynamic programming algorithm: $c[h,f,s,w]$ is the number modulo $4$ of nice triples of size $s$ and weight $w$ with respect to $h$ and~$f$. Clearly, the number of states is $\cO(t \cdot 4^{t} \cdot n^3)$ (since $s$ is at most $n$ and $w$ is at most $4n^2$). Below we show how to compute all the states by going through the given tree decomposition from the leaves to the root.

\textbf{Leaf node}. If $h$ if a leaf node then $B_h = \emptyset$. Then the only possible coloring is just the empty coloring and the only nice triple with respect to $h$ and this empty coloring is $(\emptyset, \emptyset, \emptyset)$. Hence for all $s,w$, 
\[c[h, \emptyset, s,w]=[w = 0 \land s = 0] \, .\] 

\textbf{Introduce vertex node}. Let $h$ be an introduce node and $h'$ be its child such that
$X_h = X_{h'} \cup \{v\}$ for some $v \not\in B_{h'}$.
Note that $v$ is an isolated vertex in~$G_h$.
If $v$ is not a terminal vertex (i.e., $v \not\in S$)
it can be colored using any of our four colors. 
While if $v$ is a terminal vertex it should be colored either red or blue. We arrive at the following formula where each case is applied only if none of the previous cases is applicable:
\[ c[h, f_{v \to \alpha}, s, w] = 
\begin{cases}
c[h', f, s - 1, w - w(v)] & 
\text{if $\alpha = 1_0 \lor \alpha = 1_1$} \\
0 & \text{if $v \in S$} \\
c[h', f, s - 1, w] & \text{if $\alpha = 0_1$} \\
c[h', f, s, w] & \text{if $\alpha = 0_0$}
\end{cases}
\]

\textbf{Introduce edge node}. Let $h$ be an introduce edge $uv$ with a child $t'$ such that
$E_h = E_{h'} \cup \{uv\}$ for some $u, v \in B_{h'}$ and $f$ be a coloring of $B_h$.
Clearly any triple that is nice with respect to $h$ and $f$ is also nice with respect to $h'$ and~$f$.
Hence all we need to do is to check whether all constraints are satisfied for the new edge $uv$. I.e., this edge should not join a blue vertex with a red one or
a blue/red vertex with a white one. Formally,
\[ 
c[t, f, s, w] = 
\begin{cases}
c[h', f, s, w] & \text{if $\{f(u),f(v)\} \in \{\{1_0,1_1\}, \{1_0,0_0\}, \{1_1,0_0\} \}$},\\
0 & \text{otherwise.}
\end{cases}
\]

\textbf{Forget node}. Let $h$ be a forget node with a child $h'$ such that $X_h = X_{h\rq{}} \setminus \{v\}$ for some 
$v \in B_{h'}$. 
Then clearly
\[ c[h, f, s, w] = 
\left(\sum\limits_{\alpha \in \{1_0, 1_1, 0_0, 0_1\}}c[h', f_{v \to \alpha}, s, w]\right) \bmod{4} \, .\]

\textbf{Join node}. Let $t$ be a join node with children $h_1$ and $h_2$ such that
$B_h = B_{h_1} = B_{h_2}$. 
Let $f$ be a coloring of $B_h$ (and hence also a coloring
of $B_{h_1}$ and $B_{h_2}$). Note that there is a natural one-to-one correspondence between nice triples for $h,f$ and $f$ and pairs on nice triples for $h_1,f$ and $h_2,f$. Namely, a nice triple $(P_0,P_1,Q)$ for $t,f$ defines a nice triple $(P_0^1,P_1^1,Q^1)$ for $h_1,f$ and a nice triple $(P_0^2,P_1^2,Q^2)$ for $h_2,f$ as follows ($i=1,2$):
\[P_0^i=P_0 \cap V_{h_i}, \, P_1^i=P_1 \cap V_{h_i}, \, Q^i=Q \cap V_{h_i} \, .\] And vice versa, two nice triples 
$(P_0^1,P_1^1,Q^1)$ and $(P_0^2,P_1^2,Q^2)$ define a nice triple $(P_0,P_1,Q)$ as follows:
\[P_0 = P^{1}_0 \cup P^{2}_0, \, P_1 = P^{1}_1 \cup P^{2}_1,\,
Q = Q^{1} \cup Q^{2} \, .\]
It is straightforward to check that the properties \eqref{eq:n1}--\eqref{eq:n3} are satisfied for both these maps.
This allows us to use the following formula for computing the current state. Let $s(f) = |f^{-1}(0_1) \cup f^{-1}(1_0) \cup f^{-1}(1_1)|$ and $w(f) = w(f^{-1}(1_0) \cup f^{-1}(1_1))$. Then
\[ 
c[h, f, s, w] = \left( \sum_{\substack{s_1 + s_2 = s + s(f) \\ w_1 + w_2 = w + w(f)}}c[h_1, f, s_1, w_1]{\cdot}c[h_2, f, s_2, w_2] \right) \bmod{4}
\]

This finishes the description of the dynamic programming algorithm for filling in the table~$c[]$. From this table one can easily extract the value of $\mathcal{R}_{w,s} \bmod{4}$: it is just $c[r, \emptyset, s, w]$ where $r$ is the root node of the given tree decomposition.

To conclude, it remains to note that each node in the given tree decomposition is processed in time $4^{t}\cdot n^{\cO(1)}$.
 \end{proof}

The algorithm based on the Cut\&Count technique can be generalized for \textsc{Secluded Steiner Tree} with costs in the same way as the algorithm for \textsc{Steiner Tree} in~\cite{cut-and-count}.  This way  we can obtain the algorithm that runs in time $4^{t}\cdot(n+W)^{\cO(1)}$ where $W$ is the maximal cost of vertices.
One can obtain a deterministic algorithm and improve the dependence on $W$ using the representative set technique for dynamic programming over tree decompositions introduced by Fomin et al.~\cite{fomin2014efficient}. Again by the same approach as for \textsc{Steiner Tree}, it is possible to  solve \textsc{Secluded Steiner Tree} deterministically in time $\cO((2+2^{\omega+1})^{t}\cdot(n+\log{W})^{\cO(1)})$ (here $\omega$ is the matrix multiplication constant).

\end{document}